%% file: acl_latex.tex
\documentclass[11pt]{article}

\usepackage[preprint]{acl}
\usepackage{enumitem}
\usepackage{times}
\usepackage{latexsym}
\usepackage[T1]{fontenc}
\usepackage[utf8]{inputenc}
\usepackage{microtype}
\usepackage{inconsolata}
\usepackage{graphicx}
\usepackage{booktabs}
\usepackage{float}
\usepackage{amsmath}
\usepackage{amssymb}
\usepackage{tikz}
\usetikzlibrary{arrows.meta,positioning,shapes.geometric,fit,calc,backgrounds}
\usepackage{amsmath,amssymb,amsthm}

\newtheorem{definition}{Definition}
\newtheorem{assumption}{Assumption}
\newtheorem{lemma}{Lemma}
\newtheorem{theorem}{Theorem}
\newtheorem{corollary}{Corollary}

\definecolor{syncblue}{HTML}{2C6E9B}
\definecolor{slored}{HTML}{C23B22}
\definecolor{ghostgray}{HTML}{9AA1A8}
\definecolor{pressline}{HTML}{5B6B7A}
\definecolor{goodgreen}{HTML}{2E7D4F}
\definecolor{darktext}{HTML}{222222}
\definecolor{secgray}{HTML}{6A6A6A}

\title{Not Every Sync Is Safe: Calibrated DiLoCo Scheduling for Shared AI Infrastructure}

\author{
 \textbf{ Maxwell Twelftree\textsuperscript{1}},
 \textbf{ David Lemphers\textsuperscript{1}},
 \textbf{An-chi He\textsuperscript{1}},
 \textbf{Yue Yang\textsuperscript{1}},
\\
\\
 \textsuperscript{1}Maincode, Melbourne, Australia,
\\
 \small{
   \textbf{Correspondence:} \href{mailto:yue@maincode}{yue@maincode}
 }
}

\begin{document}
\maketitle
\begin{abstract}
DiLoCo-style training reduces communication by letting learner islands train locally before occasional outer synchronization, making it attractive for fragmented industrial AI fleets where training shares hardware with latency-sensitive serving. The question for such fleets is when an outer merge is worth its system cost, and whether choosing \emph{which} windows to defer matters at all. Existing scheduling studies evaluate workload-aware policies against fixed-period baselines, but most omit the control that isolates timing from budget: matched random deferral, which inherits the controller's synchronization budget but is not itself deployable. This omission is consequential: across controlled stress tests and real vLLM sidecar replays, matched random ties or beats every forecast-free policy we test, so gains reported against weaker baselines cannot be attributed to window choice. We fill this gap with Workload-Aware DiLoCo (WA-DiLoCo), a score-based controller that weighs learner progress against fleet pressure, and a calibration protocol that determines when matched random can be beaten, then demonstrate that it can. In the bursty regime where calibration exposes request-overlap structure, adding a one-step EWMA burst forecast to the online controller beats matched random in real vLLM sidecar replay, reducing SLO violations from 6.54\% to 5.09\% (8 of 10 seeds, $p=0.021$); offline Calibrated-WA, a non-deployable bound, shows the remaining headroom at 4.45\% versus 6.26\%. The deployable lesson remains the protocol: report real-sidecar effect-size transfer, a no-sync load match, and a matched-random envelope before claiming serving-SLO improvement.
\end{abstract}
\section{Introduction}
Conventional distributed training assumes a tightly coupled cluster in which thousands of GPUs synchronize gradients at high frequency over expensive interconnects. Real industrial environments \vspace{-0.2em}
rarely satisfy this assumption: compute is fragmented across clusters, regions, cloud providers, hardware generations, and availability windows.

\begin{figure}[t]
\centering
\resizebox{\columnwidth}{!}{\input{fig_intuition_incl}}
\vspace{0.1em}
\caption{The scheduling problem. An outer merge occupies the fleet for $\sim$8\,s, and any request overlapping that window inherits the interference. Each row places the same budget of three merges: a fixed cadence collides with bursts, matched random (one draw) is equally blind, and a workload-aware policy defers merges into valleys. Placement is the only lever matched random does not capture; calibration (\S\ref{sec:eval-protocol}) decides when it is worth pulling.}
\label{fig:intuition}
\end{figure}
DiLoCo-style training reduces communication by allowing learner islands to perform many local optimization steps before an outer synchronization~\citep{douillard2023diloco}, which makes it attractive for these fragmented fleets. However, reducing how often synchronization occurs does not answer an equally important production question: when is it safe to synchronize? Recent DiLoCo variants make synchronization cheaper, less blocking, and more failure-tolerant~\citep{douillard2025streaming,douillard2026decoupled,koneputugodage2026factored,kolehmainen2025noloco,sarfi2025sparseloco}, but they do not decide whether the current serving state makes an outer merge safe.
This question becomes critical in shared fleets, where an outer merge can consume host, network, and checkpoint resources at exactly the moment latency-sensitive serving is under pressure; a fixed interval ignores fleet state, a pressure gate may defer progress too aggressively, and no single policy should be expected to dominate across all workload regimes.

To address this, we develop a calibration protocol for synchronization-scheduling claims, instantiated with Workload-Aware DiLoCo (WA-DiLoCo), a lightweight score-based controller used to test the protocol across regimes. The central object is not a universally dominant scheduler, but a decision procedure for when a training-side policy has earned a serving-SLO claim. The protocol's gate is matched random deferral: a non-deployable control that inherits the controller's synchronization budget post hoc and randomizes only placement (Figure~\ref{fig:intuition}). This separates how often to synchronize from which windows to choose. A scheduler that cannot beat its own budget placed at random has shown budget control, not workload awareness. Held to that bar, real vLLM replay yields a regime map. Pressure-heavy load is saturated; steady sync-heavy load lets WA-DiLoCo beat fixed and gated baselines but not matched random; bursty load exposes serving-cost structure that an offline calibrated variant exploits; and a pre-specified EWMA forecaster built from the same overlap signal beats matched random in real sidecar replay. The result is a protocol-first contribution: calibrate the serving mechanism, report the matched-random envelope, and deploy workload-aware synchronization only when placement has measurable value. We make four contributions:

\begin{itemize}[ leftmargin=*, topsep=1pt, itemsep=0pt, parsep=0pt ] \item \textbf{A calibrated evaluation protocol for shared-fleet DiLoCo.} We identify a gap in workload-aware synchronization evaluation: fixed-period baselines do not separate synchronization budget from window placement. Our protocol requires real-sidecar effect-size transfer, a no-sync load match, and a matched-random envelope before reporting serving-SLO gains. \item \textbf{A budget-matched random-deferral control.} We introduce matched random deferral as a non-deployable control that inherits the evaluated scheduler's merge budget and randomizes placement. It can perform well on average, but gives no decision-time guarantee for the realized run and cannot deliberately avoid high-cost windows. \item \textbf{A bounded calibrated scheduling formulation.} We formalize synchronization as feasible placement within a min/max-\(H\) corridor and show that, under a calibrated overlap model, offline calibrated scheduling is near-optimal within the matched-budget class, while the online variant degrades gracefully with forecast error. \item \textbf{An extensive real-sidecar regime study.} We run controlled stress tests and real vLLM sidecar replays across ordinary, pressure-heavy, steady sync-heavy, bursty, and non-periodic serving regimes. The experiments show when claims fail, when matched random is the right envelope, and when calibrated overlap plus forecasting improves SLOs. 
\end{itemize}

\section{Background}

\paragraph{DiLoCo-style training.}
DiLoCo partitions training into learner islands that perform many local inner steps, send model deltas to an outer synchronizer, and receive the redistributed global model \citep{douillard2023diloco}.
OpenDiLoCo provides an open implementation of this low-communication pattern \citep{du2024opendiloco}.

\paragraph{Communication and resilience.}
Streaming, gossip, decoupled, NoLoCo, and SparseLoCo variants reduce peak bandwidth, blocking, synchronization frequency, failure sensitivity, or pseudo-gradient payloads \citep{douillard2025streaming,douillard2026decoupled,koneputugodage2026factored,kolehmainen2025noloco,sarfi2025sparseloco}.
These methods improve communication and optimizer mechanics; we target the remaining scheduling problem when synchronization competes with production workloads.

\paragraph{Co-located training and serving.}
Recent systems schedule mixed training, fine-tuning, and inference workloads with SLO awareness, Pollux and Gandiva optimize cluster goodput and utilization \citep{chen2025mudi,li2025lemix,li2025mace,huang2026collm,oliaro2026flexllm,qiao2020pollux,xiao2018gandiva}, and vLLM improves serving latency through efficient KV-cache management and continuous batching \citep{kwon2023pagedattention}.
WA-DiLoCo is complementary: it exposes low-communication pretraining merges as schedulable events, then asks which deployment claim a measurement actually supports. Interference-prediction schedulers such as Paragon and Quasar~\citep{delimitrou2013paragon,delimitrou2014quasar} classify cross-workload interference for placement decisions; our protocol instead gates whether a training-side event may run now, against measured serving-side effect-size transfer.

\section{Methodology: Workload-Aware DiLoCo Synchronization}
\label{sec:method}

\subsection{Problem Setup}
\label{sec:setup}

In standard DiLoCo, learners perform many local inner updates and an outer synchronization, triggered at a fixed interval $H$, periodically merges their model updates. A fixed interval is not well matched to shared fleets. Figure~\ref{fig:intuition} illustrates the scheduling problem on a bursty serving trace: an outer merge occupies host, network, and checkpoint resources for $D=8$\,s in our configuration, and any latency-sensitive request in flight during that window inherits the interference. At matched tokens, short-run train loss is insensitive to placement within the min/max-$H$ budget (Table~\ref{tab:loss}), so synchronization becomes a one-sided scheduling problem: move a fixed budget of merges out of bursts and into valleys, but only when the trace has structure a deployable signal can see.

Let \(N\) learner islands train alongside a latency-sensitive serving workload
on the same fleet. A merge schedule is a sequence of outer-merge start times
\(S=(u_1,\ldots,u_K)\). Each outer merge has duration \(D\), which is
\(D=8\) seconds in our experiments. If the \(k\)-th merge starts at time
\(u_k\), then its merge window is
\[
W_k=[u_k,u_k+D].
\]
A serving request \(r\) has arrival time \(a_r\), completion time \(d_r\), and
request interval \(I_r=[a_r,d_r]\). We call request \(r\) \emph{sync-active}
under schedule \(S\) if it overlaps at least one outer-merge window:
\[
I_r\cap\bigcup_{k=1}^{K} W_k \neq \emptyset .
\]

Let \(\ell_r(S)\) denote the realized latency of request \(r\) under merge
schedule \(S\), and let \(\theta\) denote the latency SLO threshold for the
serving regime. A request violates the SLO if its realized latency exceeds
this threshold. Therefore, the SLO violation rate over request set \(R\)
measures the fraction of served requests that miss the latency target:
\[
V(S)=
\frac{1}{|R|}
\left|
\{r\in R:\ell_r(S)>\theta\}
\right|.
\]
When a load-matched no-synchronization run exists, we also report the
sync-excess violation rate
\[
V_{\mathrm{ex}}(S)=V(S)-V_{\mathrm{nosync}},
\]
where \(V_{\mathrm{nosync}}\) is measured under the same seed and request trace
without outer synchronization. This subtracts the base serving-load violation
rate and isolates avoidable SLO damage attributable to synchronization timing.

A schedule is the merge-start sequence $S=(u_1,\dots,u_K)$; let $g_k$ denote the gap in inner steps between consecutive merges, with $g_1$ measured from the horizon start. A schedule is \emph{feasible} when every gap respects the corridor
\begin{equation*}
H_{\min}\,\leq\, g_k \,\leq\, H_{\max}\qquad \text{for all } k,
\end{equation*}
and we write $\mathcal{F}$ for the feasible set over the training horizon and $\mathcal{F}_K\subset\mathcal{F}$ for the feasible schedules containing exactly $K$ merges. Every policy in this paper is then a decision-time heuristic for the one-sided program
\begin{equation*}
\min_{S\in\mathcal{F}}\; V_{\mathrm{ex}}(S),
\end{equation*}
and policies differ only in what they may observe when choosing $u_k$.

All policies share the same token-weighted outer merge. If learner $i$ processes $T_i$ tokens since the previous synchronization and proposes local update $\Delta_i$, the merged update is
\begin{equation*}
\Delta = \frac{\sum_{i=1}^{N} T_i\Delta_i}{\sum_{i=1}^{N} T_i}.
\end{equation*}
The merged model is redistributed before the next round, so policy differences are attributed to \emph{when} synchronization occurs rather than how updates are combined.

\subsection{WA-DiLoCo Scheduler}
\label{sec:scheduler}

WA-DiLoCo is a score-based controller for deciding when an outer merge should run. At each decision point, learner $i$ reports loss reduction $\Delta\ell_i\in\mathbb{R}_{\geq0}$, token contribution $T_i$, and staleness $A_i$. The coordinator also observes normalized network, serving, and checkpoint pressure $P_n,P_q,P_c\in[0,1]$. Heterogeneous learner signals are mapped to $[0,1]$ by the capped linear scale
\begin{equation*}
\mathrm{cap}(x;c)=\min(\max(x/c,0),1),
\end{equation*}
with $c_\ell=0.05$ for loss reduction, $c_t=10^6$ for tokens, and $c_a=1800$\,s for staleness. For learner $i$, WA-DiLoCo computes
\begin{equation*}
\begin{split}
s_i ={}& \lambda_{\ell}\,\mathrm{cap}(\Delta\ell_i;c_\ell)
       + \lambda_t\,\mathrm{cap}(T_i;c_t) \\
      &- \lambda_n P_n - \lambda_q P_q - \lambda_c P_c
       - \lambda_a\,\mathrm{cap}(A_i;c_a).
\end{split}
\end{equation*}
Progress terms raise the score; pressure and staleness lower it. The deployed weights are $(\lambda_\ell,\lambda_t,\lambda_n,\lambda_q,\lambda_c,\lambda_a)=(0.5,0.5,0.5,0.75,0.25,0.25)$; serving pressure carries the largest fleet-side weight because the target setting co-locates training with latency-sensitive serving. Under these weights, $s_i\in[-1.75,1]$: capped progress contributes at most $+1$, while maximal fleet pressure contributes $-1.5$ before staleness. Network and checkpoint terms remain harness signals in this study; no real-sidecar claim rests on them.

Let $h$ be the number of inner steps since the previous merge and let $\bar{s}=N^{-1}\sum_i s_i$. From $h\geq H_{\min}$ onward, the coordinator synchronizes iff
\begin{equation*}
\mathrm{sync}(h)=
\begin{cases}
1, & h\geq H_{\max},\\
1, & h\geq H_{\min}\ \text{and}\ \bar{s}>\tau,\\
0, & \text{otherwise},
\end{cases}
\end{equation*}
with $H_{\min}=128$, $H_{\max}=512$, and $\tau=0$. Thus the controller chooses placement within the bounded corridor $H_{\min}\leq g_k\leq H_{\max}$, not an unconstrained synchronization budget. The protocol in Section~\ref{sec:eval-protocol} audits which variants earn a serving-SLO claim.

\subsection{Calibrated Overlap Cost and Forecasting}
\label{sec:overlap}

For bursty serving, a more direct serving-cost signal exists than WA-DiLoCo's pressure terms: synchronization is most harmful when its window overlaps sidecar requests. For a candidate merge starting at $u$, define
\begin{equation*}
C(u)=\left|\{r:[a_r,d_r]\cap[u,u+D]\neq\emptyset\}\right|,
\end{equation*}
the number of sidecar requests in flight during the merge window. Bursty sidecar calibration fits a monotone map $\hat{\psi}$ on paired observations of schedule overlap and measured sync-excess violations, yielding the proxy

\begin{equation*}
\hat{V}_{\mathrm{ex}}(S)=\hat{\psi}\!\left(\textstyle\sum_{k} C(u_k)\right),
\end{equation*}
expressing schedule overlap in the units of $V_{\mathrm{ex}}$.

Offline Calibrated-WA keeps WA-DiLoCo's synchronization count but reselects the lowest-overlap feasible windows after observing the full burst trace; not deployable, it diagnoses whether exploitable low-cost windows exist. Online Calibrated-WA uses only decision-time expected overlap of the immediate next window, separating the value of the measured signal from the controller's ability to exploit it online. The EWMA forecaster adds the one-step request-rate estimate
\begin{equation*}
\hat{\lambda}_t=\alpha r_t+(1-\alpha)\hat{\lambda}_{t-1},
\end{equation*}
where $r_t$ is the recent request count, and scores a feasible window by
\begin{equation*}
\hat{C}(u)=q(u)+\hat{\lambda}_tD,
\end{equation*}
where $q(u)$ counts currently in-flight requests. At each decision step it commits to

\begin{equation*}
u^{*}\in\operatorname*{arg\,min}_{u\in\,\mathcal{U}_t}\hat{C}(u),
\end{equation*}
where $\mathcal{U}_t$ is the candidate set feasible under the $[H_{\min},H_{\max}]$ corridor, with a forced merge at $h=H_{\max}$ and the main controller's budget accounting.

\section{Evaluation Protocol}
\label{sec:eval-protocol}

Figure~\ref{fig:protocol} summarizes our protocol: a controlled stress harness tests whether synchronization timing can affect the SLO/progress frontier, and real-sidecar calibration determines whether the effect transfers to actual serving latency.

\begin{figure*}[t]
\centering
\input{fig1_incl}
\caption{Robustness matrix for DiLoCo synchronization policies. The stress harness shows that synchronization timing can affect the SLO/progress frontier, but ordinary vLLM sidecars show that synthetic pressure does not always transfer to real serving. Under steady sync-heavy replay, WA-DiLoCo outperforms fixed and gate baselines, although matched random deferral remains competitive. Under bursty serving, offline Calibrated-WA performs best by selecting synchronization windows with lower measured request overlap.
}
\label{fig:protocol}
\end{figure*}

\paragraph{Stress-harness test.}
The harness alternates low/high pressure windows, emits 32 synthetic requests every 5 seconds, applies a 120\,ms SLO, and adds an 8-second post-merge synchronization penalty; a window is high pressure when the maximum of the three pressure signals reaches 0.6. This stage tests controlled scheduling sensitivity, not production SLO improvement.

\paragraph{Matched baselines.}
We compare fixed-$H$ 256, fixed-$H$ 512, pressure gate, matched pressure gate, random deferral, matched random deferral, and WA-DiLoCo; the matched baselines ensure WA-DiLoCo cannot win merely by synchronizing less often or receiving more deferral opportunities. We treat matched random deferral as a gate in its own right: a serving-SLO claim must survive the matched-random envelope, not only fixed and gated baselines. Formally, for a policy $\pi$ whose realized schedule $S_\pi$ contains $K_\pi$ merges, the matched-random envelope is
\begin{equation*}
\mathrm{MR}(\pi)=\mathbb{E}_{S\sim\mathrm{Unif}(\mathcal{F}_{K_\pi})}\!\left[V(S)\right],
\end{equation*}
estimated per seed from post-hoc uniform draws, and the placement gain of $\pi$ is
\begin{equation*}
G(\pi)=\mathrm{MR}(\pi)-\mathbb{E}\!\left[V(S_\pi)\right].
\end{equation*}
$G(\pi)>0$ is exactly the claim that window choice, not sync rate, improves serving; $V_{\mathrm{ex}}$ replaces $V$ wherever a load-matched no-sync run exists.

\paragraph{Sidecar calibration.}
Stress-harness effects must transfer to real sidecars before supporting a production-serving claim. We replay pressure and synchronization traces through real sidecars and measure p95 effect-size transfer, not only pressure-latency correlation, summarized by the pressure ratio $\rho_p=\mathrm{p95}_{\mathrm{high}}/\mathrm{p95}_{\mathrm{low}}$ and the synchronization ratio $\rho_s=\mathrm{p95}_{\mathrm{sync\text{-}act}}/\mathrm{p95}_{\mathrm{inact}}$; without transfer in the relevant ratio, the result remains a controlled benchmark result.

\paragraph{Claim hierarchy.}
We separate five evidence tiers: controlled stress-harness behavior, real-sidecar effect-size transfer, policy replay inside a transferred regime, offline calibrated replay, and online controller behavior (Table~\ref{tab:claim_hierarchy}). Evidence at one tier motivates the next experiment but does not license the next claim; in particular, offline Calibrated-WA earns a best-in-regime replay claim only after calibration identifies synchronization-window interference, and a separate online experiment tests the same signal at decision time.

\paragraph{Isolation audit.}
Because shared-fleet state can diverge from scheduler-visible state, we audit isolation with a \texttt{nvidia-smi} guard, quarantine contaminated runs, and requeue clean runs before interpreting policy differences.

\section{Experiments}

\paragraph{Deployment motivation.}
Our experiments are motivated by a shared internal accelerator fleet rather than a dedicated DiLoCo cluster: the same pool runs pretraining experiments, vLLM sidecars, checkpoint I/O, and fabric tests, so an outer merge is a schedulable production event; we treat this study as deployment-motivated replay.
\subsection{Experimental Setup}

Experiments run on one 8$\times$B200 DGX node with one learner per GPU. The main training workload continues SmolLM2-135M on FineWeb-Edu using bf16, sequence length 1024, and AdamW; held-out evaluation uses WikiText-2. Real-serving calibration and replay use vLLM sidecars on the same B200 environment, with Qwen2.5 models and controlled pressure-heavy, sync-heavy, and bursty serving regimes.

For bursty replay, we compare WA-DiLoCo, Calibrated-WA, fixed-interval policies, pressure gates, matched random deferral, and a no-sync load-matched baseline. The reported sync-excess SLO subtracts the no-sync run under the same seed and burst trace, isolating avoidable SLO damage from synchronization timing. Headline comparisons appear in the main text (Table~\ref{tab:headline}); detailed calibration and policy replay tables, hyperparameters, seeds, sensitivity sweeps, statistical tests, and MI355X/RCCL fabric smoke-test details appear in Appendix~\ref{app:setup} and the released artifact.

\subsection{Results}
\label{sec:results}

\paragraph{Calibration Identifies the Valid Regimes}
Table~\ref{tab:calibration} is the first robustness check. The synthetic harness produces $\rho_p=1.71$--$1.72$, but ordinary Qwen2.5 vLLM sidecars from 0.5B to 7B remain nearly flat at $\rho_p=1.002$--$1.005$, even when pressure-latency correlation is high; the effect size does not transfer. Targeted sidecars expose two mechanisms: pressure-heavy serving reaches real $\rho_p=1.546$ (correlation 0.898), while sync-heavy serving is flat on pressure but reaches $\rho_s=1.556$, so synchronization-window interference transfers independently of pressure.
\vspace{-0.5em}

\paragraph{Policy Robustness Across Real vLLM Regimes}
Table~\ref{tab:headline} reports each policy as placement gain $G$ over its matched-random envelope (Section~\ref{sec:eval-protocol}); Appendix Table~\ref{tab:stage11} gives the full replay table. Policy choice is regime-dependent. Pressure-heavy replay transfers but is saturated, with all policies between $V=35.91$\% and $37.21$\%. Sync-heavy replay is controllable: WA-DiLoCo reaches $V=6.46$\%, beating fixed-$H$ 256, fixed-$H$ 512, pressure gate, and matched pressure gate on all five paired seeds ($p=0.031$), but matched random reaches 6.38\% and ties it statistically ($G=-0.08$pp, $p=0.56$), so steady load supports no dominance claim. Bursty sync-heavy replay is where calibration pays: offline Calibrated-WA reduces $V_{\mathrm{ex}}$ to 4.45\%, versus 6.92\% for WA-DiLoCo and an $\mathrm{MR}$ envelope of 6.26\% across 25 draws, winning all five seeds ($G=+1.81$pp, $p=0.031$). Online Calibrated-WA reaches $V_{\mathrm{ex}}=5.92 \pm 1.48$\% but still ties matched random ($G=+0.34$pp, $p=0.344$), leaving forecasting as the gap.

\noindent\textbf{Forecast, Harness, and Ablations}
The offline and online calibrated replays separate signal validity from online exploitability. Online Calibrated-WA validates the overlap-cost signal but does not significantly beat matched random by itself (G = +0.34pp, p = 0.344). We therefore test one pre-specified forecast-augmented variant: an EWMA request-rate forecaster that scores feasible windows by expected overlap before the merge is launched. On the overlap-calibrated proxy it wins 8 of 10 seeds against matched random (p = 0.020), reducing busy-window synchronization fraction from 0.242 to 0.125. Replaying the same schedules through real Qwen2.5-1.5B vLLM sidecars at a matched budget of eight synchronization events, the forecaster cuts SLO violations from 6.54 ± 1.60\% to 5.09 ± 1.19\%, a +1.46pp placement gain, wins 8 of 10 seeds (p = 0.021), and reduces sync-active requests from 286 ± 67 to 220 ± 52 (Tables 1 and 10).

\begin{table}[t]
\centering
\footnotesize
\setlength{\tabcolsep}{3.5pt}
\begin{tabular}{@{}lrrr@{}}
\toprule
\textbf{Policy} & \textbf{Viol.\ (\%)} & \textbf{$G$ (pp)} & \textbf{$p$} \\
\midrule
\multicolumn{4}{@{}l}{\emph{Steady sync-heavy, real sidecar ($V$, 5 seeds)}} \\
Random matched & $6.38 \pm 0.71$ & -- & -- \\
WA-DiLoCo & $6.46 \pm 0.35$ & $-0.08$ & $0.56$ \\
\midrule
\multicolumn{4}{@{}l}{\emph{Bursty replay ($V_{\mathrm{ex}}$, 5 seeds, 25-draw envelope)}} \\
Random matched (MR) & $6.26$ & -- & -- \\
WA-DiLoCo & $6.92$ & $-0.66$ & -- \\
Online Cal-WA & $5.92 \pm 1.48$ & $+0.34$ & $0.344$ \\
Offline Cal-WA$^{\dagger}$ & $4.45$ & $+1.81$ & $0.031$ \\
\midrule
\multicolumn{4}{@{}l}{\emph{Non-periodic, real sidecar ($V$, 10 seeds)}} \\
Random matched & $6.54 \pm 1.60$ & -- & -- \\
EWMA forecaster & $\mathbf{5.09 \pm 1.19}$ & $+1.46$ & $0.021$ \\
\bottomrule
\end{tabular}
\caption{Headline results by regime: violation rate and placement gain $G$ over the matched-random envelope (Section~\ref{sec:eval-protocol}); one-sided paired sign-flip $p$ versus matched random. $^{\dagger}$Offline bound, not deployable.}
\label{tab:headline}
\end{table}

\section{Discussion and Conclusion} \begin{table}[t] \centering \footnotesize \setlength{\tabcolsep}{4pt} \begin{tabular}{@{}lll@{}} \toprule \textbf{Regime} & \textbf{Evidence} & \textbf{Use} \\ \midrule No p95 transfer & H/L p95 $\approx$ 1.00 & Any policy \\ Pressure-heavy & All within 1.3pp & None; saturated \\ Steady sync-heavy & WA ties random & WA, but control ties it \\ Bursty calibrated & Cal-WA wins & Calibrated policy \\ Uncalibrated & -- & Run protocol first \\ \bottomrule \end{tabular} \caption{Policy selection by calibrated regime.} \label{tab:decision} \end{table}
\vspace{-0.5em}
\noindent\textbf{Operational reading.}
At the forecaster replay load, about 4{,}300 requests per 600\,s, the 1.46pp
gain corresponds to roughly 60 fewer SLO violations per sidecar per ten minutes
at identical training throughput. Calibration costs about one node-day of
sidecar replays and amortizes until the serving model, fleet, or merge payload
changes.

\noindent\textbf{Why matched random is a control, not a scheduler.}
Matched random is a strong envelope because it inherits the evaluated policy's
synchronization budget and randomizes only placement. It can perform well on
average, but a fleet executes one realized schedule: random placement gives no
decision-time guarantee that the next merge avoids a high-cost serving window.
Calibrated scheduling instead optimizes a measured overlap proxy inside the
feasible min/max-\(H\) corridor. Appendix~A formalizes the bound: under the
calibrated overlap model, the offline calibrated schedule is within
\(2\epsilon_{\mathrm{cal}}\) of the best feasible matched-budget schedule, while
the online version adds \(2L_{\psi}E_f\) forecast error. Thus matched random is
the right control, while calibrated scheduling is the bounded solution when the
overlap model is valid.

\noindent \textbf{Conclusion.}
WA-DiLoCo turns DiLoCo synchronization into an industry-facing control problem:
when shared AI fleets mix training, serving, checkpointing, and fabric pressure,
outer merges must be scheduled, not merely made cheaper. Our calibration protocol
sets the production bar: beat matched random in a calibrated deployment regime
before claiming serving-SLO impact. The EWMA controller meets that bar in
real-sidecar replay.

\section*{Limitations}
Our evaluation is a stress-tested replay study rather than a live production deployment. The policy-replay gates use real vLLM sidecars and real or replayed synchronization traces, but they are not live co-training deployments with customer traffic. The bursty replay demonstrates that serving-cost calibration can beat matched random deferral in that regime, not that the same schedule is universally optimal, and the online Calibrated-WA follow-up does not significantly beat matched random deferral, so the current online controller remains an incomplete implementation of the calibrated scheduling idea. The 8-second synchronization window reflects this configuration; merge duration scales with model and payload size, so calibration must be repeated per deployment. The forecast-augmented online result now includes real vLLM sidecar replay, but on one sidecar model (Qwen2.5-1.5B-Instruct), one non-periodic trace family, and ten seeds; it is replay, not live production deployment. The no-sync runs on this trace are not load-matched, so we report the forecaster comparison as head-to-head SLO at a matched synchronization budget rather than sync-excess SLO.

The observed regimes should also be interpreted as deployment-specific. Ordinary sidecars may appear insensitive because of B200 headroom, the tested model sizes, concurrency range, and request mix. However, this does not weaken the evaluation protocol; rather, it reinforces the need to stress-test synchronization policies under the target serving regime before making production-SLO claims. Training scale is another limitation. The main experiments use SmolLM2-135M on a single 8$\times$B200 node, with a shorter SmolLM2-1.7B scale check. These runs exercise real synchronization logic and accelerator-side behavior, but they do not capture full large-model optimizer dynamics or multi-node communication contention. Similarly, the MI355X/RCCL experiment validates outer-sync-sized collectives on real fabric, but it is not a full multi-node WA-DiLoCo policy replay. Finally, some controller inputs remain system signals rather than fully validated production mechanisms. Serving-side effects are tested through real sidecars, but network and checkpoint pressure are not yet evaluated through full fabric-contention or checkpoint-overlap policy replays. Because WA-DiLoCo consumes these signals, future deployments should tune weights from real fleet logs and add guardrails for starvation, fairness, rollback, and checkpoint safety.

\bibliography{custom}

\clearpage
\onecolumn
\appendix
\raggedbottom
\setlength{\intextsep}{9pt}
\setlength{\textfloatsep}{9pt}
\setlength{\floatsep}{9pt}
\setlength{\abovecaptionskip}{6pt}

\section{Theoretical Analysis}
\label{sec:theory}

This section gives a calibrated scheduling guarantee for WA-DiLoCo. We show that, under a
calibrated overlap model, offline calibrated scheduling is near-optimal among
feasible schedules with the same merge budget, while the online version degrades
gracefully with forecast error.

\subsection{Notation and Definitions}

\begin{definition}[Merge schedule]
Let \(D>0\) denote the duration of one outer synchronization. A schedule with
\(K\) outer merges is an ordered sequence
\[
S=(u_1,\ldots,u_K),
\]
where \(u_k\) is the start time of the \(k\)-th outer merge. The corresponding
merge window is
\[
W(u_k)=[u_k,u_k+D].
\]
\end{definition}

\begin{definition}[Synchronization corridor]
Let \(g_k(S)\) be the number of local inner steps between the \((k-1)\)-st and
\(k\)-th outer merge, with \(g_1(S)\) measured from the beginning of the
training horizon. Given two constants \(H_{\min}\) and \(H_{\max}\), a schedule
is feasible if
\[
H_{\min}\le g_k(S)\le H_{\max}
\qquad
\text{for all } k=1,\ldots,K.
\]
We write \(\mathcal{F}_K\) for the set of feasible schedules with exactly
\(K\) outer merges.
\end{definition}

\begin{definition}[Serving requests and SLO violations]
Let \(R\) be the set of serving requests in the trace. For each request
\(r\in R\), let \(a_r\) be its arrival time and \(d_r\) be its completion time.
Define the request interval
\[
I_r=[a_r,d_r].
\]
For a schedule \(S\), let \(\ell_r(S)\) be the realized latency of request
\(r\). Let \(\theta\) be the SLO latency threshold. The SLO violation rate under
schedule \(S\) is
\[
V(S)
=
\frac{1}{|R|}
\sum_{r\in R}
\mathbf{1}\{\ell_r(S)>\theta\}.
\]
Let \(V_{\mathrm{nosync}}\) be the violation rate in a load-matched run with no
outer synchronization. The sync-excess violation rate is
\[
V_{\mathrm{ex}}(S)
=
V(S)-V_{\mathrm{nosync}}.
\]
\end{definition}

\begin{definition}[Overlap cost]
For a candidate merge starting at time \(u\), define its request-overlap cost as
\[
C(u)
=
\sum_{r\in R}
\mathbf{1}\{I_r\cap [u,u+D]\neq \emptyset\}.
\]
Thus \(C(u)\) counts how many serving requests overlap the synchronization
window. For a full schedule \(S=(u_1,\ldots,u_K)\), define the total overlap
cost
\[
\Phi(S)
=
\sum_{k=1}^{K} C(u_k).
\]
\end{definition}

\subsection{Assumptions}

\begin{assumption}[Matched merge budget]
\label{assump:matched-budget}
All schedules compared in the analysis belong to the same feasible class
\(\mathcal{F}_K\). In particular, the fixed-interval DiLoCo schedule
\(S_H\), the offline calibrated schedule \(S_{\mathrm{off}}\), and the online
schedule \(S_{\mathrm{on}}\) all contain \(K\) outer merges and satisfy
\[
H_{\min}\le g_k(S)\le H_{\max}.
\]
\end{assumption}

\begin{assumption}[Calibrated overlap model]
\label{assump:calibration}
There exists a monotone nondecreasing calibration function
\[
\psi:\mathbb{R}_{\ge 0}\rightarrow \mathbb{R}
\]
and a calibration error \(\epsilon_{\mathrm{cal}}\ge 0\) such that, for every
feasible \(K\)-merge schedule \(S\in\mathcal{F}_K\),
\[
\left|
V_{\mathrm{ex}}(S)-\psi(\Phi(S))
\right|
\le
\epsilon_{\mathrm{cal}}.
\]
This assumption intentionally restricts the guarantee to regimes where total
request overlap is a calibrated proxy for synchronization-induced SLO damage.
The theory below should therefore be read as a calibrated-regime guarantee,
not an unconditional statement about all serving traces.
\end{assumption}

\begin{assumption}[Lipschitz calibration]
\label{assump:lipschitz}
For the online guarantee, we additionally assume that the calibration function
\(\psi\) is \(L_\psi\)-Lipschitz:
\[
|\psi(x)-\psi(y)|
\le
L_\psi |x-y|
\qquad
\text{for all } x,y\ge 0.
\]
\end{assumption}

Assumption~\ref{assump:calibration} is the key modeling assumption. It says
that total overlap \(\Phi(S)\) predicts sync-excess SLO damage up to error
\(\epsilon_{\mathrm{cal}}\). The theory below should therefore be interpreted
as a calibrated scheduling guarantee, not as an unconditional claim about every
possible serving trace.

\subsection{A Calibration Transfer Lemma}

\begin{lemma}[Calibration transfer]
\label{lem:calibration-transfer}
Under Assumption~\ref{assump:calibration}, for any two feasible schedules
\(S_A,S_B\in\mathcal{F}_K\),
\[
V_{\mathrm{ex}}(S_B)-V_{\mathrm{ex}}(S_A)
\ge
\psi(\Phi(S_B))-\psi(\Phi(S_A))-2\epsilon_{\mathrm{cal}}.
\]
Equivalently,
\[
V_{\mathrm{ex}}(S_A)
\le
V_{\mathrm{ex}}(S_B)
+
\psi(\Phi(S_A))-\psi(\Phi(S_B))
+
2\epsilon_{\mathrm{cal}}.
\]
\end{lemma}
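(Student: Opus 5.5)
The plan is to apply Assumption~\ref{assump:calibration} once to each schedule and chain the resulting inequalities; no other ingredient is needed. Write
\[
e_A = V_{\mathrm{ex}}(S_A)-\psi(\Phi(S_A)), \qquad e_B = V_{\mathrm{ex}}(S_B)-\psi(\Phi(S_B)).
\]
Because both \(S_A\) and \(S_B\) lie in \(\mathcal{F}_K\), the calibration assumption gives \(|e_A|\le\epsilon_{\mathrm{cal}}\) and \(|e_B|\le\epsilon_{\mathrm{cal}}\).

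Next I will decompose the difference exactly:
\[
V_{\mathrm{ex}}(S_B)-V_{\mathrm{ex}}(S_A)=\psi(\Phi(S_B))-\psi(\Phi(S_A))+(e_B-e_A).
\]
It then suffices to show \(e_B-e_A\ge-2\epsilon_{\mathrm{cal}}\). This follows from \(e_B\ge-\epsilon_{\mathrm{cal}}\) and \(-e_A\ge-\epsilon_{\mathrm{cal}}\), which together give the first inequality.

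For the second form, I will rearrange the first inequality by moving \(V_{\mathrm{ex}}(S_A)\) to the right and the \(\psi\)-difference together with \(2\epsilon_{\mathrm{cal}}\) to the left. Negating the \(\psi\)-difference gives exactly the stated bound on \(V_{\mathrm{ex}}(S_A)\).

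None of the steps is a real obstacle. The one point to check is that the assumption is invoked only for schedules in \(\mathcal{F}_K\); this is exactly the hypothesis of the lemma. Monotonicity of \(\psi\) and the Lipschitz condition of Assumption~\ref{assump:lipschitz} are not used here. They enter only later, when the \(\psi\)-difference is bounded via \(\Phi(S_{\mathrm{off}})\le\Phi(S)\) or via forecast error.
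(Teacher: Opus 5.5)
Your proof is correct and follows the paper's argument: the paper takes the two one-sided bounds $V_{\mathrm{ex}}(S_B)\ge\psi(\Phi(S_B))-\epsilon_{\mathrm{cal}}$ and $V_{\mathrm{ex}}(S_A)\le\psi(\Phi(S_A))+\epsilon_{\mathrm{cal}}$ from Assumption~\ref{assump:calibration} and subtracts them, which is your $e_B\ge-\epsilon_{\mathrm{cal}}$, $-e_A\ge-\epsilon_{\mathrm{cal}}$ step without the error-variable notation. You are also right that neither monotonicity nor the Lipschitz condition is needed for this lemma.
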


\begin{proof}
By Assumption~\ref{assump:calibration},
\[
V_{\mathrm{ex}}(S_B)
\ge
\psi(\Phi(S_B))-\epsilon_{\mathrm{cal}},
\]
and
\[
V_{\mathrm{ex}}(S_A)
\le
\psi(\Phi(S_A))+\epsilon_{\mathrm{cal}}.
\]
Subtracting the second inequality from the first gives
\[
V_{\mathrm{ex}}(S_B)-V_{\mathrm{ex}}(S_A)
\ge
\psi(\Phi(S_B))-\psi(\Phi(S_A))-2\epsilon_{\mathrm{cal}}.
\]
This proves the claim.
\end{proof}

\subsection{Offline Calibrated-WA}

Offline Calibrated-WA is an oracle diagnostic. It observes the full serving
trace and chooses the feasible schedule with minimum total overlap:
\[
S_{\mathrm{off}}
\in
\operatorname*{arg\,min}_{S\in\mathcal{F}_K}
\Phi(S).
\]
This schedule is not deployable because it uses future request information. Its
purpose is to measure how much avoidable synchronization damage exists in the
trace.

Let
\[
S^\star
\in
\operatorname*{arg\,min}_{S\in\mathcal{F}_K}
V_{\mathrm{ex}}(S)
\]
be the best feasible \(K\)-merge schedule in terms of true sync-excess SLO
damage.

\begin{theorem}[Offline calibrated near-optimality]
\label{thm:offline-near-optimality}
Under Assumptions~\ref{assump:matched-budget} and
\ref{assump:calibration},
\[
V_{\mathrm{ex}}(S_{\mathrm{off}})
\le
V_{\mathrm{ex}}(S^\star)
+
2\epsilon_{\mathrm{cal}}.
\]
\end{theorem}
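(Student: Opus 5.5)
The argument is a direct application of Lemma~\ref{lem:calibration-transfer} with $S_A=S_{\mathrm{off}}$ and $S_B=S^\star$, combined with the defining optimality of $S_{\mathrm{off}}$ and the monotonicity of $\psi$.

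First I check that the lemma applies. By Assumption~\ref{assump:matched-budget}, both $S_{\mathrm{off}}$ and $S^\star$ lie in $\mathcal{F}_K$. Here $S_{\mathrm{off}}$ is a minimizer of $\Phi$ over $\mathcal{F}_K$, and $S^\star$ is a minimizer of $V_{\mathrm{ex}}$ over $\mathcal{F}_K$. I will assume both argmins are attained; this holds if $\mathcal{F}_K$ is finite, as it is when decision points are discrete inner steps. The equivalent form of the lemma then gives
\[
V_{\mathrm{ex}}(S_{\mathrm{off}})\le V_{\mathrm{ex}}(S^\star)+\psi(\Phi(S_{\mathrm{off}}))-\psi(\Phi(S^\star))+2\epsilon_{\mathrm{cal}}.
\]

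Next I show the middle difference is nonpositive. Since $S^\star\in\mathcal{F}_K$ and $S_{\mathrm{off}}$ minimizes $\Phi$ over $\mathcal{F}_K$, we have $\Phi(S_{\mathrm{off}})\le\Phi(S^\star)$. Because $\psi$ is monotone nondecreasing (Assumption~\ref{assump:calibration}), it follows that $\psi(\Phi(S_{\mathrm{off}}))\le\psi(\Phi(S^\star))$. Substituting this into the display gives the stated bound.

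I do not expect a real obstacle. The only subtle points are that the monotonicity of $\psi$ is exactly what converts proxy-optimality of $S_{\mathrm{off}}$ into the needed sign, and that both schedules must be in $\mathcal{F}_K$ so the calibration bound holds for each. If attainment of the argmin were in doubt, I would instead use an approximate minimizer and let its slack go to zero, which yields the same inequality in the limit.
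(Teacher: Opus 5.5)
Your proof is correct and follows the paper's argument. The paper uses $\Phi(S_{\mathrm{off}})\le\Phi(S^\star)$ and the monotonicity of $\psi$, then applies the upper calibration bound at $S_{\mathrm{off}}$ and the lower bound at $S^\star$ directly; that pair of bounds is exactly what the calibration-transfer lemma packages, so the only difference is that you cite the lemma with $S_A=S_{\mathrm{off}}$, $S_B=S^\star$ where the paper writes out its two inequalities inline.
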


\begin{proof}
Since \(S_{\mathrm{off}}\) minimizes total overlap over \(\mathcal{F}_K\),
\[
\Phi(S_{\mathrm{off}})
\le
\Phi(S^\star).
\]
Because \(\psi\) is monotone nondecreasing,
\[
\psi(\Phi(S_{\mathrm{off}}))
\le
\psi(\Phi(S^\star)).
\]
Using Assumption~\ref{assump:calibration},
\[
V_{\mathrm{ex}}(S_{\mathrm{off}})
\le
\psi(\Phi(S_{\mathrm{off}}))+\epsilon_{\mathrm{cal}}.
\]
Combining the previous two inequalities gives
\[
V_{\mathrm{ex}}(S_{\mathrm{off}})
\le
\psi(\Phi(S^\star))+\epsilon_{\mathrm{cal}}.
\]
Again by Assumption~\ref{assump:calibration},
\[
\psi(\Phi(S^\star))
\le
V_{\mathrm{ex}}(S^\star)+\epsilon_{\mathrm{cal}}.
\]
Therefore,
\[
V_{\mathrm{ex}}(S_{\mathrm{off}})
\le
V_{\mathrm{ex}}(S^\star)+2\epsilon_{\mathrm{cal}}.
\]
\end{proof}

\begin{corollary}[Offline comparison with fixed-interval DiLoCo]
\label{cor:offline-fixed}
Let \(S_H\in\mathcal{F}_K\) be the fixed-interval DiLoCo schedule with the same
merge budget. Under Assumptions~\ref{assump:matched-budget} and
\ref{assump:calibration},
\[
V_{\mathrm{ex}}(S_H)-V_{\mathrm{ex}}(S_{\mathrm{off}})
\ge
\psi(\Phi(S_H))-\psi(\Phi(S_{\mathrm{off}}))
-
2\epsilon_{\mathrm{cal}}.
\]
In particular, if \(\psi(z)=\beta z\) for some \(\beta>0\), then
\[
V_{\mathrm{ex}}(S_H)-V_{\mathrm{ex}}(S_{\mathrm{off}})
\ge
\beta
\left[
\Phi(S_H)-\Phi(S_{\mathrm{off}})
\right]
-
2\epsilon_{\mathrm{cal}}.
\]
Thus \(S_{\mathrm{off}}\) strictly improves over fixed-interval synchronization
whenever
\[
\beta
\left[
\Phi(S_H)-\Phi(S_{\mathrm{off}})
\right]
>
2\epsilon_{\mathrm{cal}}.
\]
\end{corollary}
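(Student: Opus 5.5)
The plan is to read Corollary~\ref{cor:offline-fixed} off Lemma~\ref{lem:calibration-transfer}. By Assumption~\ref{assump:matched-budget}, both $S_H$ and $S_{\mathrm{off}}$ lie in the common class $\mathcal{F}_K$, so the lemma applies to the pair. I would set $S_B=S_H$ and $S_A=S_{\mathrm{off}}$. The first form of the lemma then gives
\[
V_{\mathrm{ex}}(S_H)-V_{\mathrm{ex}}(S_{\mathrm{off}})\ge\psi(\Phi(S_H))-\psi(\Phi(S_{\mathrm{off}}))-2\epsilon_{\mathrm{cal}},
\]
which is the general inequality. No optimality property of $S_{\mathrm{off}}$ is needed for this step, only membership in $\mathcal{F}_K$ and Assumption~\ref{assump:calibration}.

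For the linear specialization, I substitute $\psi(z)=\beta z$. This $\psi$ is monotone nondecreasing because $\beta>0$, so it is an admissible calibration function. The difference $\psi(\Phi(S_H))-\psi(\Phi(S_{\mathrm{off}}))$ becomes $\beta[\Phi(S_H)-\Phi(S_{\mathrm{off}})]$, which yields the second display.

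The strict-improvement claim follows by inspection. If $\beta[\Phi(S_H)-\Phi(S_{\mathrm{off}})]>2\epsilon_{\mathrm{cal}}$, the right-hand side is strictly positive, so $V_{\mathrm{ex}}(S_{\mathrm{off}})<V_{\mathrm{ex}}(S_H)$.

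As a remark, I would note that $\Phi(S_{\mathrm{off}})\le\Phi(S_H)$ by the defining minimality of $S_{\mathrm{off}}$ over $\mathcal{F}_K$. The bracket is therefore nonnegative, and the condition is nonvacuous exactly when the fixed schedule incurs avoidable overlap.

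There is no real obstacle here. The only point needing care is confirming that $S_H$ is genuinely in $\mathcal{F}_K$ with the same $K$, since the lemma's hypothesis is stated only for schedules in that class. Assumption~\ref{assump:matched-budget} provides this explicitly.
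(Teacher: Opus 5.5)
Your proposal is correct and follows the paper's proof exactly: apply Lemma~\ref{lem:calibration-transfer} with $S_A=S_{\mathrm{off}}$ and $S_B=S_H$ (both in $\mathcal{F}_K$ by Assumption~\ref{assump:matched-budget}), substitute $\psi(z)=\beta z$, and observe that a strictly positive lower bound forces $V_{\mathrm{ex}}(S_{\mathrm{off}})<V_{\mathrm{ex}}(S_H)$. Your extra remark that $\Phi(S_{\mathrm{off}})\le\Phi(S_H)$ by minimality is correct but not needed. Note that when $\epsilon_{\mathrm{cal}}>0$, a strictly positive gap $\Phi(S_H)-\Phi(S_{\mathrm{off}})$ is necessary for the strict-improvement condition but not sufficient.
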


\begin{proof}
The first inequality follows directly from
Lemma~\ref{lem:calibration-transfer} with \(S_A=S_{\mathrm{off}}\) and
\(S_B=S_H\). The linear case follows by substituting \(\psi(z)=\beta z\).
Strict improvement holds whenever the lower bound is positive.
\end{proof}

\subsection{Online Calibrated-WA}

The online controller cannot observe the future serving trace. At each
decision, it estimates the overlap cost of feasible candidate merge windows
using current in-flight requests and an EWMA request-rate forecast.

\begin{definition}[Online candidate set]
Let \(j\in\{1,\ldots,K\}\) index online synchronization decisions. At decision
\(j\), let
\[
\mathcal{U}_j
\]
be the set of candidate merge start times that are feasible under the
\([H_{\min},H_{\max}]\) corridor. The set \(\mathcal{U}_j\) includes the forced
merge at \(H_{\max}\).
\end{definition}

\begin{definition}[True and predicted online overlap]
For candidate \(u\in\mathcal{U}_j\), let \(C_j(u)\) be the true overlap cost of
starting a merge at \(u\) from decision state \(j\). In the trace-based setting,
\(C_j(u)=C(u)\), but the subscript \(j\) emphasizes that the candidate is
evaluated at a particular online decision.

Let \(\hat C_j(u)\) be the online prediction of \(C_j(u)\). Online
Calibrated-WA chooses
\[
\hat u_j
\in
\operatorname*{arg\,min}_{u\in\mathcal{U}_j}
\hat C_j(u).
\]
The one-step clairvoyant choice at the same decision state is
\[
u_j^\circ
\in
\operatorname*{arg\,min}_{u\in\mathcal{U}_j}
C_j(u).
\]
\end{definition}

\begin{assumption}[Online forecast error]
\label{assump:forecast-error}
At every online decision \(j\), the predicted overlap error is bounded by
\(\epsilon_{f,j}\ge 0\):
\[
|\hat C_j(u)-C_j(u)|
\le
\epsilon_{f,j}
\qquad
\text{for all } u\in\mathcal{U}_j.
\]
Define the cumulative forecast error
\[
E_f
=
\sum_{j=1}^{K}\epsilon_{f,j}.
\]
If the error is uniformly bounded, we write
\[
\epsilon_{f,j}\le \epsilon_f
\qquad
\text{for all } j,
\]
so that \(E_f\le K\epsilon_f\).
\end{assumption}

\begin{lemma}[One-step online overlap regret]
\label{lem:one-step-regret}
Under Assumption~\ref{assump:forecast-error}, at every decision \(j\),
\[
C_j(\hat u_j)
\le
C_j(u_j^\circ)+2\epsilon_{f,j}.
\]
Consequently,
\[
\sum_{j=1}^{K} C_j(\hat u_j)
\le
\sum_{j=1}^{K} C_j(u_j^\circ)
+
2E_f.
\]
\end{lemma}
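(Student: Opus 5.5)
The plan is the standard "argmin of a perturbed objective" argument, applied one decision at a time and then summed. Fix a decision \(j\). I will chain three inequalities between the realized choice \(\hat u_j\) and the clairvoyant choice \(u_j^\circ\); both are elements of \(\mathcal{U}_j\), so Assumption~\ref{assump:forecast-error} applies to each of them.

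First, the forecast-error bound at \(u=\hat u_j\) gives
\[
C_j(\hat u_j)\le \hat C_j(\hat u_j)+\epsilon_{f,j}.
\]
Second, \(\hat u_j\) minimizes \(\hat C_j\) over \(\mathcal{U}_j\) and \(u_j^\circ\in\mathcal{U}_j\), so
\[
\hat C_j(\hat u_j)\le \hat C_j(u_j^\circ).
\]
Third, the forecast-error bound at \(u=u_j^\circ\) gives
\[
\hat C_j(u_j^\circ)\le C_j(u_j^\circ)+\epsilon_{f,j}.
\]
Chaining the three yields \(C_j(\hat u_j)\le C_j(u_j^\circ)+2\epsilon_{f,j}\), which is the one-step claim.

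For the consequence, I sum the per-decision inequality over \(j=1,\ldots,K\). Since \(E_f=\sum_j\epsilon_{f,j}\), the error terms sum to exactly \(2E_f\). No interaction between decisions needs controlling, because each \(u_j^\circ\) is defined relative to the same decision state \(j\) and the same candidate set \(\mathcal{U}_j\) as \(\hat u_j\).

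There is no genuine obstacle. The only points to check are well-posedness and scope. The argmins must exist: \(\mathcal{U}_j\) is nonempty because it contains the forced merge at \(H_{\max}\), and it is finite in the step-indexed setting. The statement also deliberately compares against the \emph{one-step} clairvoyant sum \(\sum_j C_j(u_j^\circ)\) rather than \(\Phi(S_{\mathrm{off}})\). Bridging to the global offline optimum, and then through \(L_\psi\) and Lemma~\ref{lem:calibration-transfer} to \(V_{\mathrm{ex}}\), is left to a subsequent result and is not required here.
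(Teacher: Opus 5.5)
Your proof is correct and matches the paper's argument: the same three-inequality chain (forecast error at \(\hat u_j\), the argmin property of \(\hat u_j\) compared with \(u_j^\circ\in\mathcal{U}_j\), and forecast error at \(u_j^\circ\)), followed by summation over \(j\) using \(E_f=\sum_j\epsilon_{f,j}\). Your remarks that \(\mathcal{U}_j\) is nonempty and that the minimizers exist address a point the paper takes for granted, and they do no harm.
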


\begin{proof}
By Assumption~\ref{assump:forecast-error},
\[
C_j(\hat u_j)
\le
\hat C_j(\hat u_j)+\epsilon_{f,j}.
\]
Since \(\hat u_j\) minimizes predicted overlap over \(\mathcal{U}_j\),
\[
\hat C_j(\hat u_j)
\le
\hat C_j(u_j^\circ).
\]
Again by Assumption~\ref{assump:forecast-error},
\[
\hat C_j(u_j^\circ)
\le
C_j(u_j^\circ)+\epsilon_{f,j}.
\]
Combining the three inequalities yields
\[
C_j(\hat u_j)
\le
C_j(u_j^\circ)+2\epsilon_{f,j}.
\]
Summing over \(j=1,\ldots,K\) gives the second claim.
\end{proof}

Let \(S_{\mathrm{on}}\) be the schedule produced by Online Calibrated-WA:
\[
S_{\mathrm{on}}=(\hat u_1,\ldots,\hat u_K).
\]
Then
\[
\Phi(S_{\mathrm{on}})
=
\sum_{j=1}^{K} C_j(\hat u_j).
\]
Define the cumulative one-step clairvoyant overlap as
\[
\Phi_{\mathrm{myo}}^\circ
=
\sum_{j=1}^{K} C_j(u_j^\circ).
\]
Lemma~\ref{lem:one-step-regret} implies
\[
\Phi(S_{\mathrm{on}})
\le
\Phi_{\mathrm{myo}}^\circ+2E_f.
\]

\begin{theorem}[Online calibrated guarantee]
\label{thm:online-guarantee}
Suppose Assumptions~\ref{assump:matched-budget},
\ref{assump:calibration}, \ref{assump:lipschitz}, and
\ref{assump:forecast-error} hold. Suppose further that the sequence of
one-step clairvoyant choices defines a feasible comparator schedule
\[
S_{\mathrm{myo}}^\circ\in\mathcal{F}_K
\]
with
\[
\Phi(S_{\mathrm{myo}}^\circ)=\Phi_{\mathrm{myo}}^\circ.
\]
Then
\[
V_{\mathrm{ex}}(S_{\mathrm{on}})
\le
V_{\mathrm{ex}}(S_{\mathrm{myo}}^\circ)
+
2L_\psi E_f
+
2\epsilon_{\mathrm{cal}}.
\]
If \(\epsilon_{f,j}\le\epsilon_f\) for all \(j\), then
\[
V_{\mathrm{ex}}(S_{\mathrm{on}})
\le
V_{\mathrm{ex}}(S_{\mathrm{myo}}^\circ)
+
2L_\psi K\epsilon_f
+
2\epsilon_{\mathrm{cal}}.
\]
\end{theorem}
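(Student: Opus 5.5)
We chain three bounds: the overlap regret of the online schedule, the Lipschitz property of \(\psi\), and the calibration transfer lemma. The structure parallels the proof of Theorem~\ref{thm:offline-near-optimality}. The difference is that monotonicity of \(\psi\) alone no longer suffices, because \(\Phi(S_{\mathrm{on}})\) is not below \(\Phi(S_{\mathrm{myo}}^\circ)\). It only exceeds it by at most \(2E_f\), so the slack must pass through \(\psi\) via Lipschitz continuity.

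First, apply Lemma~\ref{lem:calibration-transfer} with \(S_A=S_{\mathrm{on}}\) and \(S_B=S_{\mathrm{myo}}^\circ\). Both lie in \(\mathcal{F}_K\): the first by Assumption~\ref{assump:matched-budget}, the second by the hypothesis of the theorem. This gives
\[
V_{\mathrm{ex}}(S_{\mathrm{on}})\le V_{\mathrm{ex}}(S_{\mathrm{myo}}^\circ)+\psi(\Phi(S_{\mathrm{on}}))-\psi(\Phi(S_{\mathrm{myo}}^\circ))+2\epsilon_{\mathrm{cal}}.
\]

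Second, bound the middle difference \(\psi(\Phi(S_{\mathrm{on}}))-\psi(\Phi(S_{\mathrm{myo}}^\circ))\). There are two cases.
\begin{itemize}
\item If \(\Phi(S_{\mathrm{on}})\le\Phi(S_{\mathrm{myo}}^\circ)\), monotonicity makes the difference nonpositive, which is at most \(2L_\psi E_f\).
\item Otherwise, Assumption~\ref{assump:lipschitz} bounds the difference by \(L_\psi(\Phi(S_{\mathrm{on}})-\Phi(S_{\mathrm{myo}}^\circ))\). Lemma~\ref{lem:one-step-regret} then gives
\[
\Phi(S_{\mathrm{on}})=\sum_j C_j(\hat u_j)\le \Phi_{\mathrm{myo}}^\circ+2E_f=\Phi(S_{\mathrm{myo}}^\circ)+2E_f,
\]
so the difference is at most \(2L_\psi E_f\).
\end{itemize}
The identity \(\Phi(S_{\mathrm{myo}}^\circ)=\Phi_{\mathrm{myo}}^\circ\) is assumed in the theorem. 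The identity \(\Phi(S_{\mathrm{on}})=\sum_j C_j(\hat u_j)\) holds because \(C_j=C\) in the trace-based setting. Substituting this bound into the first inequality gives the main claim.

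Finally, the uniform case follows from \(E_f\le K\epsilon_f\) in Assumption~\ref{assump:forecast-error} and \(L_\psi\ge 0\).

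The main obstacle is not computational. It is making sure the comparator \(S_{\mathrm{myo}}^\circ\) is a legitimate element of \(\mathcal{F}_K\), so that Assumption~\ref{assump:calibration} and hence Lemma~\ref{lem:calibration-transfer} apply to it. Here the decision states \(j\) depend on the online trajectory, so the clairvoyant choices need not assemble into a feasible schedule. The theorem sidesteps this by taking feasibility and the overlap identity as hypotheses, and I would simply invoke them explicitly at that step.
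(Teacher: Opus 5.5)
Your proof is correct and follows the paper's argument. The paper chains Lemma~\ref{lem:one-step-regret}, the Lipschitz bound, and the two calibration inequalities inline, which is exactly what your single application of Lemma~\ref{lem:calibration-transfer} packages, and your case split is a small improvement: it makes explicit that the paper's step \(\psi(\Phi(S_{\mathrm{on}}))\le\psi(\Phi(S_{\mathrm{myo}}^\circ))+2L_\psi E_f\), justified there by Lipschitz continuity alone, also needs the monotonicity of \(\psi\) when \(\Phi(S_{\mathrm{on}})<\Phi(S_{\mathrm{myo}}^\circ)\).
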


\begin{proof}
From Lemma~\ref{lem:one-step-regret},
\[
\Phi(S_{\mathrm{on}})
\le
\Phi(S_{\mathrm{myo}}^\circ)+2E_f.
\]
By Assumption~\ref{assump:lipschitz},
\[
\psi(\Phi(S_{\mathrm{on}}))
\le
\psi(\Phi(S_{\mathrm{myo}}^\circ))
+
2L_\psi E_f.
\]
Using Assumption~\ref{assump:calibration},
\[
V_{\mathrm{ex}}(S_{\mathrm{on}})
\le
\psi(\Phi(S_{\mathrm{on}}))+\epsilon_{\mathrm{cal}}.
\]
Therefore,
\[
V_{\mathrm{ex}}(S_{\mathrm{on}})
\le
\psi(\Phi(S_{\mathrm{myo}}^\circ))
+
2L_\psi E_f
+
\epsilon_{\mathrm{cal}}.
\]
Again by Assumption~\ref{assump:calibration},
\[
\psi(\Phi(S_{\mathrm{myo}}^\circ))
\le
V_{\mathrm{ex}}(S_{\mathrm{myo}}^\circ)
+
\epsilon_{\mathrm{cal}}.
\]
Combining the two inequalities gives
\[
V_{\mathrm{ex}}(S_{\mathrm{on}})
\le
V_{\mathrm{ex}}(S_{\mathrm{myo}}^\circ)
+
2L_\psi E_f
+
2\epsilon_{\mathrm{cal}}.
\]
If \(\epsilon_{f,j}\le\epsilon_f\) for all \(j\), then
\(E_f\le K\epsilon_f\), which yields the uniform-error result.
\end{proof}

Theorem~\ref{thm:online-guarantee} says that Online Calibrated-WA approaches a
one-step clairvoyant overlap scheduler when the forecast error is small. The
comparator is not the full offline optimum; it is the best immediate decision
available at each online decision state.

\section{Claim Hierarchy and Artifact Checklist}

Table~\ref{tab:claim_hierarchy} summarizes the claim boundary used throughout the paper.
The hierarchy is intentionally strict: passing one tier motivates the next experiment, but does not license the next claim.

\begin{table}[H]
\centering
\small
\begin{tabular}{@{}p{0.16\linewidth}p{0.25\linewidth}p{0.27\linewidth}p{0.22\linewidth}@{}}
\toprule
\textbf{Tier} & \textbf{Evidence required} & \textbf{This paper} & \textbf{Valid claim} \\
\midrule
Stress harness & Controlled pressure replay with fixed and deferral baselines & WA changes the SLO/progress frontier & Scheduling can matter in a repeatable benchmark \\
Ordinary sidecar & Real p95 effect-size transfer & Qwen2.5 vLLM H/L p95 is 1.002--1.005 & No production SLO claim from the harness alone \\
Targeted sidecar & Mechanism-specific transfer & Pressure-heavy and sync-heavy regimes transfer differently & Calibrate the mechanism before policy replay \\
Steady replay & Matched policy traces and matched deferral & WA beats fixed/gate, ties random & WA controls sync windows, but random is strong \\
Bursty offline replay & No-sync load match and random-draw envelope & Cal-WA is best tested policy; beats WA and random mean & Calibrated-WA wins in this bursty replay \\
Online controller & Decision-time signal only & Online Cal-WA partially recovers the signal but ties random & Online controller remains incomplete \\
Online forecaster (proxy) & Decision-time EWMA forecast, overlap-proxy scoring & Beats matched random 8/10 seeds ($p=0.020$) & Mechanism identified; precursor to sidecar replay \\
Online forecaster (sidecar) & Same schedules through real vLLM on the same trace & Beats matched random 8/10 seeds ($p=0.021$), 5.09\% vs 6.54\% & Online forecasting beats random in real-sidecar replay; one trace family \\
\bottomrule
\end{tabular}
\caption{Claim hierarchy. Each row states what the evidence supports and what it does not support.}
\label{tab:claim_hierarchy}
\end{table}

The anonymized artifact is available at \url{https://anonymous.4open.science/r/workload-aware-diloco-05FB/README.md}.
It is designed to let readers audit the protocol even when they cannot reproduce 8xB200 or MI355X runs.
It contains the WA-DiLoCo implementation, stress-harness and sidecar-replay scripts, Slurm launchers, policy/config files, aggregate summaries for Stages 10, 11, 13, and 14, Stage 12 replay/schedule scripts, Stage 18 forecaster replay and Stage 19 sidecar-replay summaries, plotting inputs for Figure~\ref{fig:protocol}, and anonymized per-seed CSV/JSON summaries.
We include scripts that regenerate the calibration tables, Stage 11/13 policy summaries, significance tests, and the MI355X/RCCL fabric-smoke table from the released summaries; the Stage 12 scripts document how the bursty replay and random-draw envelopes were generated.
Large model checkpoints, raw serving logs with prompt text, and cluster-specific paths are excluded; the released summaries preserve seed, policy, sync-window, latency, SLO, and throughput fields needed to reproduce the paper tables.

\section{Implementation and Evaluation Details}
\label{app:setup}

\paragraph{WA-DiLoCo configuration.}
The main WA-DiLoCo policy uses min-$H=128$, max-$H=512$, and threshold $0.0$. Loss-progress and token-contribution weights are both $0.5$; network pressure is $0.5$, serving pressure is $0.75$, checkpoint pressure is $0.25$, and staleness is $0.25$. Cap scales are $0.05$ for loss reduction, $10^6$ tokens, and $1800$ seconds of staleness. Serving pressure receives the largest fleet-side weight because the target setting is co-located training and latency-sensitive serving. Network and checkpoint terms are included as shared-fleet signals, but in this study they remain harness signals unless separately calibrated.

\paragraph{Hardware and training.}
Experiments run on one 8$\times$B200 DGX node with one learner per GPU. We continue HuggingFaceTB/SmolLM2-135M on HuggingFaceFW/fineweb-edu using bf16, sequence length 1024, per-device batch size 2, AdamW learning rate $2\cdot10^{-5}$, weight decay 0.1, and gradient clipping 1.0. Held-out evaluation uses WikiText-2. The main frontier uses five seeds, the guarded ablation campaign contains 79 training runs, and the SmolLM2-1.7B scale check uses three seeds.

\paragraph{Serving sidecar and replay.}
Real-serving calibration uses vLLM on the same B200 environment with Qwen2.5-0.5B, 1.5B, 3B, and 7B sidecars; short, mixed, and long prompts; concurrency 1--8; and targeted pressure-heavy or sync-heavy stress. Final policy replay sends fixed-$H$, gate, matched-deferral, and WA traces through Qwen2.5-1.5B-Instruct vLLM sidecars for 600 seconds across five seeds. Bursty replay alternates quiet and busy windows, includes a no-sync load-matched baseline, runs five matched-random draws per seed, and reports sync-excess SLO.

\paragraph{Detailed calibration and policy replay tables.}
The following tables provide the full calibration and Stage 11 replay results summarized in the main text.

\begin{table}[H]
\centering
\small
\begin{tabular}{llrrrrr}
\toprule
\textbf{Run} & \textbf{Condition} & \textbf{Req.} & \textbf{Real H/L p95} & \textbf{Corr.} & \textbf{Sync ratio} & \textbf{Sim H/L p95} \\
\midrule
0.5B baseline & c=1, short & 12,740 & 1.0047 & 0.046 & 1.0008 & 1.7179 \\
0.5B high-conc. & c=8, mixed & 25,664 & 1.0020 & 0.760 & 1.0002 & 1.7191 \\
1.5B medium & c=4, mixed & 11,004 & 1.0031 & 0.076 & 1.0006 & 1.7175 \\
1.5B high-conc. & c=8, long & 14,688 & 1.0034 & 0.773 & 1.0008 & 1.7204 \\
3B high-conc. & c=8, long & 9,200 & 1.0034 & 0.781 & 1.0006 & 1.7163 \\
7B high & c=4, long & 4,036 & 1.0046 & 0.775 & 1.0005 & 1.7088 \\
\midrule
1.5B pressure-heavy & c=8, mixed & 10,240 & \textbf{1.5461} & \textbf{0.898} & 0.9998 & 1.7093 \\
1.5B sync-heavy & c=8, mixed & 11,792 & 1.0001 & 0.016 & \textbf{1.5560} & 1.7115 \\
\bottomrule
\end{tabular}
\caption{Real vLLM calibration matrix. Ordinary sidecars show little p95 transfer despite strong synthetic effects; targeted regimes reveal when pressure or synchronization-window interference is real.}
\label{tab:calibration}
\end{table}

\begin{table}[H]
\centering
\small
\begin{tabular}{lrrrrr}
\toprule
\textbf{Policy} & \textbf{Pressure SLO} & \textbf{Pressure p95} & \textbf{Sync SLO} & \textbf{Sync p95} & \textbf{Sync tok/s} \\
\midrule
Fixed-$H$ 256 & $36.67 \pm 1.61$ & $1238.2$ & $19.21 \pm 1.73$ & $1245.9$ & $1242.5$ \\
Fixed-$H$ 512 & $36.82 \pm 0.57$ & $1212.4$ & $11.04 \pm 1.21$ & $1224.8$ & $1301.3$ \\
Pressure gate & $37.21 \pm 1.35$ & $1228.1$ & $13.73 \pm 0.53$ & $1254.5$ & $1280.8$ \\
Gate matched & $\mathbf{35.91 \pm 0.77}$ & $1283.0$ & $11.93 \pm 3.36$ & $1224.5$ & $1300.6$ \\
Random matched & $36.61 \pm 1.04$ & $1225.4$ & $\mathbf{6.38 \pm 0.71}$ & $1173.3$ & $1333.9$ \\
WA-DiLoCo & $37.07 \pm 1.45$ & $1244.1$ & $6.46 \pm 0.35$ & $1241.5$ & $\mathbf{1342.3}$ \\
\bottomrule
\end{tabular}
\caption{Real vLLM policy replay over five seeds: SLO violation rate (\%), p95 latency (ms), and serving throughput in output tokens per second. Pressure-heavy replay is saturated and does not separate policies; sync-heavy replay separates fixed/gate policies from WA-DiLoCo and matched random deferral.}
\label{tab:stage11}
\end{table}

\paragraph{Signals, release, and statistics.}
Serving cost is expected request overlap with the next 8-second synchronization window. Sync-excess SLO subtracts the no-sync run under the same seed and burst trace. We include per-seed policy outcomes, random-draw envelopes, weight and min/max-$H$ sweeps, and scripts/traces for the stress harness, policy replay, and sidecar summaries. A separate MI355X/RCCL smoke test measures outer-sync-sized 1GiB-per-rank all-reduce payloads from 16 to 48 GPUs (Table~\ref{tab:stage14_fabric}). For matched policy replays, we report exact one-sided paired sign-flip tests over seed-level differences; for unpaired stress-harness and ablation comparisons, we report exact one-sided label-permutation tests. Five-seed sign-flip tests bottom out at $p = 1/32 \approx 0.031$, so a five-seed result at $p=0.031$ conveys exactly a 5/5 sweep; the forecaster comparisons therefore use ten seeds, and ten-seed replication of the main replays is straightforward with the released harness.

\begin{table}[H]
\centering
\small
\begin{tabular}{rrrr}
\toprule
\textbf{Nodes} & \textbf{GPUs} & \textbf{1GiB/rank all-reduce} & \textbf{BusBW} \\
\midrule
2 & 16 & $5.621$ ms & $358.2$ GB/s \\
4 & 32 & $5.767$ ms & $360.7$ GB/s \\
6 & 48 & $6.004$ ms & $350.2$ GB/s \\
\bottomrule
\end{tabular}
\caption{MI355X/RCCL fabric smoke test. Outer-sync-sized collectives are measurable on real multi-node fabric, but this is not a full WA-DiLoCo policy replay.}
\label{tab:stage14_fabric}
\end{table}

\paragraph{Additional result details.}
At a 15.96M-token target, WA-DiLoCo reaches 3.54\% SLO-like violations, compared with 19.38\% for fixed-$H$ 256 and 14.29\% for fixed-$H$ 512 ($p=0.004$). Its gap over pressure gate and random deferral is not separable at five seeds, so the harness supports scheduling sensitivity rather than production dominance. Guarded ablations attribute robustness to combining progress and pressure: full WA reaches 2.10\% SLO-like violations, removing serving pressure, network pressure, loss progress, or token progress raises violations to 5.46--6.32\%, pressure-only reaches 6.34\%, and progress-only degrades to 26.64\%. WA-DiLoCo also defers more often than the pressure gate at matched tokens (54.6 versus 41.6, Table~\ref{tab:loss}) with different timing. A SmolLM2-1.7B scale check preserves the qualitative pattern, held-out WikiText-2 loss remains stable across the 135M frontier, and the MI355X/RCCL smoke test shows outer-sync-sized 1GiB-per-rank collectives are measurable on a separate multi-node fabric (Table~\ref{tab:stage14_fabric}), though not as a full policy replay.

\begin{table}[H]
\centering
\small
\begin{tabular}{lrr}
\toprule
\textbf{Variant} & \textbf{SLO viol.} & \textbf{$p_<$} \\
\midrule
Full WA & $\mathbf{2.10 \pm 2.88}$ & -- \\
No SLO pressure & $5.46 \pm 0.85$ & .032 \\
No network pressure & $5.66 \pm 0.91$ & .040 \\
No loss progress & $5.60 \pm 1.06$ & .044 \\
No token progress & $6.32 \pm 1.11$ & .012 \\
Pressure only & $6.34 \pm 1.00$ & .012 \\
Progress only & $26.64 \pm 6.96$ & .004 \\
Random matched & $4.99 \pm 3.04$ & .071 \\
Gate matched & $5.06 \pm 2.31$ & .040 \\
\bottomrule
\end{tabular}
\caption{Guarded token-matched ablations over five seeds. $p_<$ tests lower WA SLO-like violations.}
\label{tab:guarded_ablation}
\end{table}

\begin{table}[H]
\centering
\small
\begin{tabular}{lrrrr}
\toprule
\textbf{Policy} & \textbf{Tokens M} & \textbf{Train loss} & \textbf{SLO viol.} & \textbf{Defers} \\
\midrule
Fixed-$H$ 256 & $16.136 \pm 0.129$ & $2.649 \pm 0.025$ & $19.381 \pm 2.016$ & $0.000 \pm 0.000$ \\
Fixed-$H$ 512 & $16.136 \pm 0.129$ & $2.678 \pm 0.062$ & $14.293 \pm 0.361$ & $0.000 \pm 0.000$ \\
Pressure gate & $16.136 \pm 0.129$ & $2.661 \pm 0.019$ & $3.932 \pm 0.615$ & $41.600 \pm 8.764$ \\
Random deferral & $16.136 \pm 0.129$ & $2.661 \pm 0.019$ & $4.437 \pm 0.413$ & $38.600 \pm 5.273$ \\
WA-DiLoCo & $16.136 \pm 0.129$ & $2.661 \pm 0.019$ & $3.544 \pm 1.993$ & $54.600 \pm 5.413$ \\
\bottomrule
\end{tabular}
\caption{Token-matched stress-harness frontier with train loss over five seeds. Synchronization policy does not measurably affect short-run train loss at matched tokens, isolating serving SLO as the operative axis.}
\label{tab:loss}
\end{table}

\paragraph{Forecast-augmented online replay.}
The forecaster observes the recent request process with an exponentially weighted moving average and chooses among feasible synchronization windows using the calibrated overlap cost, under the same min/max-$H$ budget accounting. Replay uses one fixed non-periodic request trace over ten seeds; sync-excess SLO is estimated from measured window overlap via the bursty sidecar calibration rather than a live sidecar. Matched random reaches $7.64 \pm 1.78$\%, the forecaster $6.00 \pm 1.36$\%, and the offline oracle $2.71$\%; the forecaster reduces overlapped requests per run from $344 \pm 80$ to $270 \pm 61$ and busy-window synchronization fraction from $0.242$ to $0.125$. The exact one-sided paired sign-flip test over seed differences gives $p=0.020$. Table~\ref{tab:ewma_forecaster} summarizes the replay. The forecaster was specified from the offline-online diagnosis before either replay was run, and it is the only forecast-based online variant we evaluated; the proxy and sidecar replays were each run once with pre-fixed seeds. Separately explored online cost-gate variants ran only on a harness proxy, produced no usable evidence, and are not reported as evidence.

\begin{table}[H]
\centering
\small
\begin{tabular}{@{}lrrr@{}}
\toprule
\textbf{Policy} & \textbf{Overlap} & \textbf{Busy frac.} & \textbf{Excess SLO} \\
\midrule
Random matched & $344.0 \pm 80.4$ & $0.242 \pm 0.135$ & $7.637 \pm 1.784$ \\
EWMA forecaster & $270.2 \pm 61.2$ & $0.125 \pm 0.112$ & $5.998 \pm 1.359$ \\
Offline oracle & $122.0 \pm 0.0$ & $0.000 \pm 0.000$ & $2.708 \pm 0.000$ \\
\bottomrule
\end{tabular}
\caption{Replay-only online forecasting check on one non-periodic serving trace over 10 seeds. The SLO column is an overlap-calibrated proxy from bursty vLLM replay, not a new vLLM sidecar measurement. Lower is better.}
\label{tab:ewma_forecaster}
\end{table}

\paragraph{Real-sidecar forecaster replay.}
The same schedules are then replayed through real Qwen2.5-1.5B-Instruct vLLM sidecars on the same non-periodic trace for 600 seconds per seed, with eight synchronization events per run and roughly 4,200--4,400 requests served per run. The exact one-sided paired sign-flip test over the ten seed differences gives $p=0.021$. The accompanying no-sync runs on this trace are not load-matched and are excluded from sync-excess accounting.

\begin{table}[H]
\centering
\small
\begin{tabular}{lrrrr}
\toprule
\textbf{Policy} & \textbf{SLO viol.} & \textbf{p95 (ms)} &  \textbf{Sync-active reqs} & \textbf{Sync events} \\
\midrule
Random matched & $6.544 \pm 1.601$ & $1120.3 \pm 153.7$ & $286.0 \pm 66.5$ & $8$ \\
EWMA forecaster & $5.087 \pm 1.190$ & $1025.1 \pm 216.4$ & $219.8 \pm 51.7$ & $8$ \\
\bottomrule
\end{tabular}
\caption{Real vLLM sidecar replay of the forecaster and matched-random schedules over ten seeds on the non-periodic trace. Lower is better; both policies use identical synchronization budgets.}
\label{tab:stage19}
\end{table}

\end{document}

%% file: fig_intuition_incl.tex
\begin{tikzpicture}[x=0.80cm,y=1.0cm,font=\scriptsize,line cap=round]
\def\w{0.72} 

\def\loadcoords{(0,0.07) (0.45,0.08) (0.75,0.30) (1.05,0.55) (1.45,0.50)
 (1.80,0.40) (2.10,0.12) (2.70,0.08) (3.15,0.28) (3.55,0.55) (3.95,0.45)
 (4.30,0.12) (5.10,0.06) (5.90,0.08) (6.35,0.32) (6.90,0.55) (7.45,0.46)
 (8.05,0.54) (8.55,0.30) (9.05,0.10) (10,0.06)}

\newcommand{\loadrow}[3]{
  \begin{scope}[yshift=#1cm]
    \fill[ghostgray!40] plot[smooth,tension=0.7] coordinates {\loadcoords}
      -- (10,0) -- (0,0) -- cycle;
    \draw[pressline!85,line width=0.5pt]
      plot[smooth,tension=0.7] coordinates {\loadcoords};
    \draw[darktext!60,line width=0.4pt] (0,0) -- (10,0);
    \node[anchor=west,darktext,inner sep=1pt] at (0,0.95) {#2};
    \node[anchor=east,inner sep=1pt] at (10,0.95) {#3};
  \end{scope}}

\newcommand{\syncwin}[3]{
  \fill[#3,opacity=0.38] ({#2-\w/2},#1) rectangle ({#2+\w/2},{#1+0.62});
  \draw[#3,line width=0.6pt] ({#2-\w/2},#1) rectangle ({#2+\w/2},{#1+0.62});}

\loadrow{3.6}{\textbf{Fixed-$H$} (DiLoCo default)}
        {\color{slored}\textbf{3/3 in bursts}}
\loadrow{1.8}{\textbf{Matched random} (control, same budget)}
        {\color{slored}\textbf{2/3 this draw}}
\loadrow{0}{\textbf{Workload-aware + forecast} (ours)}
        {\color{goodgreen}\textbf{0/3}}

\syncwin{3.6}{1.20}{slored}
\syncwin{3.6}{3.85}{slored}
\syncwin{3.6}{6.50}{slored}
\syncwin{1.8}{2.50}{goodgreen}
\syncwin{1.8}{3.60}{slored}
\syncwin{1.8}{8.00}{slored}
\syncwin{0}{2.55}{goodgreen}
\syncwin{0}{5.35}{goodgreen}
\syncwin{0}{9.45}{goodgreen}
\draw[slored,dashed,line width=0.6pt]
  ({3.70-\w/2},0) rectangle ({3.70+\w/2},0.62);
\draw[-{Stealth},syncblue,line width=0.8pt]
  (4.10,0.36) to[bend left=22] (4.95,0.36);
\node[syncblue,anchor=west,inner sep=1pt] at (5.62,0.70) {deferred past burst};

\node[pressline,anchor=south,inner sep=1pt] at (3.10,4.78) {request bursts};
\draw[-{Stealth},pressline,line width=0.5pt] (3.25,4.74) -- (3.33,4.05);
\node[darktext,anchor=south,inner sep=1pt] at (6.85,4.78) {outer merge ($\sim$8\,s)};
\draw[-{Stealth},darktext,line width=0.5pt] (6.70,4.74) -- (6.55,4.30);

\draw[-{Stealth},darktext!70,line width=0.5pt] (0,-0.24) -- (10,-0.24);
\node[darktext!70,anchor=north east,inner sep=1pt] at (10,-0.30) {time};
\fill[slored,opacity=0.38] (0,-0.82) rectangle (0.30,-0.62);
\draw[slored,line width=0.5pt] (0,-0.82) rectangle (0.30,-0.62);
\node[anchor=west,darktext,inner sep=1pt] at (0.38,-0.72)
  {merge in burst $\rightarrow$ SLO violations};
\fill[goodgreen,opacity=0.38] (6.10,-0.82) rectangle (6.40,-0.62);
\draw[goodgreen,line width=0.5pt] (6.10,-0.82) rectangle (6.40,-0.62);
\node[anchor=west,darktext,inner sep=1pt] at (6.48,-0.72)
  {merge in valley $\rightarrow$ cheap};
\end{tikzpicture}

%% file: fig1_incl.tex
\definecolor{waBlue}{HTML}{0072B2}
\definecolor{calPurple}{HTML}{CC79A7}
\definecolor{randGreen}{HTML}{009E73}
\definecolor{gateOrange}{HTML}{E69F00}
\definecolor{fixedGray}{HTML}{6E6E6E}
\definecolor{lightGray}{HTML}{BFBFBF}
\definecolor{warnRed}{HTML}{D55E00}
\definecolor{ink}{HTML}{1A1A1A}
\definecolor{panelBg}{HTML}{FAFAFA}
\definecolor{panelEdge}{HTML}{D9D9D9}
\definecolor{slate}{HTML}{4D6B82}

\colorlet{accentA}{slate}
\colorlet{accentB}{warnRed}
\colorlet{accentC}{waBlue}
\colorlet{accentD}{calPurple}

\def\Afixed{19.4}\def\Agate{3.9}\def\Arand{4.4}\def\Awa{3.5}
\def\Bsyn{1.71}
\def\Cfixed{19.2}\def\CfixedB{11.0}\def\Cgate{13.7}\def\Crand{6.38}\def\Cwa{6.46}
\def\Dfixed{19.0}\def\Dgate{12.7}\def\Drand{6.3}\def\Dwa{6.9}\def\Dcal{4.4}

\newcommand{\slobar}[5]{%
  \pgfmathsetmacro{\bh}{#3*0.13}%
  \pgfmathsetmacro{\cx}{#1+#2/2}%
  \fill[#4] (#1,0) rectangle ++(#2,\bh);%
  \node[vlab] at (\cx,\bh+0.18) {#3};%
  \node[xlab] at (\cx,-0.18) {#5};%
}

\resizebox{\textwidth}{!}{%
\begin{tikzpicture}[
  font=\sffamily,
  >=Latex,
  panel/.style={rounded corners=4pt, draw=panelEdge, line width=0.5pt,
                fill=panelBg, minimum width=4.7cm, minimum height=6.58cm,
                anchor=north west},
  ptitle/.style={font=\footnotesize\bfseries, text=ink, anchor=west},
  psub/.style={font=\scriptsize, text=black!55, anchor=west},
  metric/.style={font=\scriptsize, text=black!70, anchor=west},
  tick/.style={font=\scriptsize, text=black!55, anchor=east},
  vlab/.style={font=\scriptsize, text=ink, anchor=south},
  xlab/.style={font=\scriptsize, text=black!75, anchor=north},
  axis/.style={black!45, line width=0.4pt},
  grid/.style={black!10, line width=0.3pt, dash pattern=on 1pt off 1.4pt},
  callout/.style={rounded corners=2.5pt, fill=white, draw=panelEdge,
                  line width=0.4pt, font=\scriptsize, align=center,
                  text=black!72, inner sep=3pt, text width=3.4cm},
  badge/.style={circle, minimum size=0.42cm, inner sep=0pt,
                font=\scriptsize\bfseries, text=white,
                text height=1.6ex, text depth=0.25ex},
]

\def\PA{0}    \def\PB{5.2}  \def\PC{10.4} \def\PD{15.6}
\def\PW{4.7}  \def\PH{7.0}
\pgfmathsetmacro{\midY}{-3.6}
\pgfmathsetmacro{\baseY}{-5.2}
\path[use as bounding box] (-0.12,-7.08) rectangle (20.35,1.03);

\node[anchor=west, font=\large\bfseries, text=ink] at (\PA,1.02)
  {Calibrated Workload-Aware DiLoCo};
\node[anchor=west, font=\normalsize, text=black!55] at (\PA,0.57) {};
\draw[panelEdge, line width=0.6pt] (\PA,0.27) -- (20.3,0.27);

\foreach \px/\acc in {\PA/accentA,\PB/accentB,\PC/accentC,\PD/accentD}{
  \node[panel] at (\px,0) {};
  \fill[\acc, rounded corners=4pt] (\px,0) rectangle ++(\PW,-0.14);
}

\draw[->, line width=0.8pt, draw=black!45]
  ($(\PA,\midY)+(\PW+0.05,0)$) -- ($(\PB,\midY)+(-0.05,0)$);
\draw[->, line width=0.8pt, draw=black!45]
  ($(\PB,\midY)+(\PW+0.05,0)$) -- ($(\PC,\midY)+(-0.05,0)$);
\draw[->, line width=0.8pt, draw=black!45]
  ($(\PC,\midY)+(\PW+0.05,0)$) -- ($(\PD,\midY)+(-0.05,0)$);

\begin{scope}
  \node[badge, fill=accentA, anchor=west] (p1badge) at (\PA+0.18,-0.58) {1};
  \node[ptitle, right=6pt of p1badge] (p1title) {Synthetic harness};
  \node[psub, below=3pt of p1title.west, yshift=-6pt, anchor=west] {controlled SLO-like benchmark};
  \node[metric, below=3pt of p1title.west, yshift=-18pt, anchor=west] {SLO-like violations (\%)};
\end{scope}
\begin{scope}[shift={(\PA+0.78,\baseY)}]
  \draw[grid] (0,1.3) -- (3.7,1.3);  \draw[grid] (0,2.6) -- (3.7,2.6);
  \draw[axis] (0,0) -- (3.7,0);       \draw[axis] (0,0) -- (0,2.78);
  \node[tick] at (-0.06,0) {0}; \node[tick] at (-0.06,1.3) {10}; \node[tick] at (-0.06,2.6) {20};
  \slobar{0.30}{0.42}{\Afixed}{fixedGray}{F256}
  \slobar{1.18}{0.42}{\Agate}{gateOrange}{Gate}
  \slobar{2.06}{0.42}{\Arand}{randGreen}{Rand}
  \slobar{2.94}{0.42}{\Awa}{waBlue}{WA}
\end{scope}
\node[callout, anchor=north] at (\PA+\PW/2,-5.92)
  {A stress-test win; not deployment.};

\begin{scope}
  \node[badge, fill=accentB, anchor=west] (p2badge) at (\PB+0.18,-0.58) {2};
  \node[ptitle, right=6pt of p2badge] (p2title) {Ordinary vLLM};
  \node[psub, below=3pt of p2title.west, yshift=-6pt, anchor=west] {p95 effect-size gate fails};
  \node[metric, below=3pt of p2title.west, yshift=-18pt, anchor=west] {high/low p95 ratio};
  \node[metric, text=black!45, below=2pt of p2title.west, yshift=-32pt, anchor=west] {1.0$\times$ = no effect};
\end{scope}
\begin{scope}[shift={(\PB+0.92,\baseY)}]
  \draw[grid] (0,1.3) -- (3.5,1.3);  \draw[grid] (0,2.6) -- (3.5,2.6);
  \draw[axis] (0,0) -- (3.5,0);       \draw[axis] (0,0) -- (0,2.78);
  \node[tick] at (-0.06,0) {1.0$\times$}; \node[tick] at (-0.06,1.3) {1.4$\times$}; \node[tick] at (-0.06,2.6) {1.8$\times$};
  \draw[warnRed, line width=0.6pt, dash pattern=on 2pt off 1.6pt] (0,0.325) -- (3.5,0.325);
  \node[font=\scriptsize, text=warnRed, anchor=east] at (3.46,0.47) {1.1$\times$ gate};
  \pgfmathsetmacro{\synh}{(\Bsyn-1)*3.25}
  \fill[warnRed] (0.55,0) rectangle ++(0.60,\synh);
  \node[vlab] at (0.85,\synh+0.18) {\Bsyn$\times$};
  \node[xlab] at (0.85,-0.18) {synthetic};
  \fill[lightGray] (2.05,0) rectangle ++(0.60,0.04);
  \draw[black!45, line width=0.6pt] (2.05,0.04) -- ++(0.60,0);
  \node[vlab, text=black!70] at (2.35,0.82) {1.002--1.005$\times$};
  \draw[black!30, line width=0.3pt] (2.35,0.76) -- (2.35,0.10);
  \node[xlab] at (2.35,-0.18) {real sidecar};
\end{scope}
\node[callout, anchor=north] at (\PB+\PW/2,-5.92)
  {Effect size, not correlation, gates it.};

\begin{scope}
  \node[badge, fill=accentC, anchor=west] (p3badge) at (\PC+0.18,-0.58) {3};
  \node[ptitle, right=6pt of p3badge] (p3title) {Steady sync-heavy};
  \node[psub, below=3pt of p3title.west, yshift=-6pt, anchor=west] {Stage 11 real-vLLM replay};
  \node[metric, below=3pt of p3title.west, yshift=-18pt, anchor=west] {SLO violations (\%)};
  \node[metric, text=black!45, below=2pt of p3title.west, yshift=-32pt, anchor=west] {1000 ms threshold};
\end{scope}
\begin{scope}[shift={(\PC+0.78,\baseY)}]
  \draw[grid] (0,1.3) -- (3.7,1.3);  \draw[grid] (0,2.6) -- (3.7,2.6);
  \draw[axis] (0,0) -- (3.7,0);       \draw[axis] (0,0) -- (0,2.78);
  \node[tick] at (-0.06,0) {0}; \node[tick] at (-0.06,1.3) {10}; \node[tick] at (-0.06,2.6) {20};
  \slobar{0.18}{0.36}{\Cfixed}{fixedGray}{F256}
  \slobar{0.86}{0.36}{\CfixedB}{lightGray}{F512}
  \slobar{1.54}{0.36}{\Cgate}{gateOrange}{Gate}
  \slobar{2.22}{0.36}{\Crand}{randGreen}{Rand}
  \slobar{2.90}{0.36}{\Cwa}{waBlue}{WA}
  \draw[black!40, line width=0.5pt]
       (2.40,1.34) -- (2.40,1.46) -- (3.08,1.46) -- (3.08,1.34);
  \node[font=\scriptsize, text=black!55] at (2.74,1.62) {$\approx$};
\end{scope}
\node[callout, anchor=north] at (\PC+\PW/2,-5.92)
  {Beats fixed-$H$/gates; random stays competitive.};

\begin{scope}
  \node[badge, fill=accentD, anchor=west] (p4badge) at (\PD+0.18,-0.58) {4};
  \node[ptitle, right=6pt of p4badge] (p4title) {Bursty calibrated};
  \node[psub, below=3pt of p4title.west, yshift=-6pt, anchor=west] {Stage 12 real-vLLM replay};
  \node[metric, below=3pt of p4title.west, yshift=-18pt, anchor=west] {sync-excess SLO (\%)};
  \node[metric, text=black!45, below=2pt of p4title.west, yshift=-32pt, anchor=west] {vs. no-sync load match};
\end{scope}
\begin{scope}[shift={(\PD+0.78,\baseY)}]
  \draw[grid] (0,1.3) -- (3.7,1.3);  \draw[grid] (0,2.6) -- (3.7,2.6);
  \draw[axis] (0,0) -- (3.7,0);       \draw[axis] (0,0) -- (0,2.78);
  \node[tick] at (-0.06,0) {0}; \node[tick] at (-0.06,1.3) {10}; \node[tick] at (-0.06,2.6) {20};
  \slobar{0.18}{0.36}{\Dfixed}{fixedGray}{F256}
  \slobar{0.86}{0.36}{\Dgate}{gateOrange}{Gate}
  \slobar{1.54}{0.36}{\Drand}{randGreen}{Rand}
  \slobar{2.22}{0.36}{\Dwa}{waBlue}{WA}
  \slobar{2.90}{0.36}{\Dcal}{calPurple}{Cal}
\end{scope}
\node[callout, anchor=north] at (\PD+\PW/2,-5.92)
  {Calibrated WA exploits structure random lacks.};

\begin{scope}[shift={(\PA,-6.92)}]
  \node[anchor=west, font=\scriptsize\bfseries, text=ink] at (0,0) {Policies};
  \pgfmathsetmacro{\legLeft}{2.10}  
  \pgfmathsetmacro{\legRight}{18.40} 
  \pgfmathsetmacro{\legStep}{(\legRight-\legLeft)/4.0}
  \foreach \i/\c/\t in {0/fixedGray/{fixed-$H$}, 1/gateOrange/{pressure gate}, 2/randGreen/{matched random}, 3/waBlue/{WA-DiLoCo}, 4/calPurple/{calibrated WA}}{
    \pgfmathsetmacro{\lx}{\legLeft+\i*\legStep}
    \fill[\c, rounded corners=1pt] (\lx,-0.08) rectangle ++(0.24,0.16);
    \node[anchor=west, font=\scriptsize, text=black!75] at (\lx+0.32,0) {\t};
  }
\end{scope}

\end{tikzpicture}}%

%% file: custom.bib
@misc{douillard2023diloco,
  title = {{DiLoCo}: Distributed Low-Communication Training of Language Models},
  author = {Douillard, Arthur and Feng, Qixuan and Rusu, Andrei A. and Chhaparia, Rachita and Donchev, Yani and Kuncoro, Adhiguna and Ranzato, Marc'Aurelio and Szlam, Arthur and Shen, Jiajun},
  year = {2023},
  eprint = {2311.08105},
  archivePrefix = {arXiv},
  primaryClass = {cs.LG},
  doi = {10.48550/arXiv.2311.08105},
  url = {https://arxiv.org/abs/2311.08105}
}

@misc{douillard2025streaming,
  title = {Streaming {DiLoCo} with Overlapping Communication: Towards a Distributed Free Lunch},
  author = {Douillard, Arthur and Donchev, Yanislav and Rush, Keith and Kale, Satyen and Charles, Zachary and Garrett, Zachary and Teston, Gabriel and Lacey, Dave and McIlroy, Ross and Shen, Jiajun and Ram{\'e}, Alexandre and Szlam, Arthur and Ranzato, Marc'Aurelio and Barham, Paul},
  year = {2025},
  eprint = {2501.18512},
  archivePrefix = {arXiv},
  primaryClass = {cs.CL},
  doi = {10.48550/arXiv.2501.18512},
  url = {https://arxiv.org/abs/2501.18512}
}

@misc{douillard2026decoupled,
  title = {Decoupled {DiLoCo} for Resilient Distributed Pre-training},
  author = {Douillard, Arthur and Rush, Keith and Donchev, Yani and Charles, Zachary and Fallen, Nova and Dubey, Ayush and Gog, Ionel and Dean, Josef and Woodworth, Blake and Garrett, Zachary and Keating, Nate and Bishop, Jenny and Prior, Henry and Yvinec, Edouard and Szlam, Arthur and Ranzato, Marc'Aurelio and Dean, Jeff},
  year = {2026},
  eprint = {2604.21428},
  archivePrefix = {arXiv},
  primaryClass = {cs.CL},
  doi = {10.48550/arXiv.2604.21428},
  url = {https://arxiv.org/abs/2604.21428}
}

@misc{kolehmainen2025noloco,
  title = {{NoLoCo}: No-all-reduce Low Communication Training Method for Large Models},
  author = {Kolehmainen, Jari and Blagoev, Nikolay and Donaghy, John and Ersoy, O{\u{g}}uzhan and Nies, Christopher},
  year = {2025},
  eprint = {2506.10911},
  archivePrefix = {arXiv},
  primaryClass = {cs.LG},
  doi = {10.48550/arXiv.2506.10911},
  url = {https://arxiv.org/abs/2506.10911}
}

@misc{sarfi2025sparseloco,
  title = {Communication Efficient {LLM} Pre-training with {SparseLoCo}},
  author = {Sarfi, Amir and Th{\'e}rien, Benjamin and Lidin, Joel and Belilovsky, Eugene},
  year = {2025},
  eprint = {2508.15706},
  archivePrefix = {arXiv},
  primaryClass = {cs.LG},
  doi = {10.48550/arXiv.2508.15706},
  url = {https://arxiv.org/abs/2508.15706}
}

@misc{du2024opendiloco,
  title = {{OpenDiLoCo}: An Open-Source Framework for Globally Distributed Low-Communication Training},
  author = {Jaghouar, Sami and Ong, Jack Min and Hagemann, Johannes},
  year = {2024},
  eprint = {2407.07852},
  archivePrefix = {arXiv},
  primaryClass = {cs.LG},
  doi = {10.48550/arXiv.2407.07852},
  url = {https://arxiv.org/abs/2407.07852}
}

@inproceedings{koneputugodage2026factored,
  title = {Factored Gossip {DiLoCo}: Reducing Blocking Communication within {DiLoCo}},
  author = {Koneputugodage, Chamin Hewa and Ajanthan, Thalaiyasingam and Ramasinghe, Sameera and Dolatabadi, Hadi Mohaghegh and Siriwardhana, Shamane and Avraham, Gil and Shevchenko, Violetta and Pajak, Karol and Snewin, James and Long, Alexander},
  booktitle = {Proceedings of the 43rd International Conference on Machine Learning (ICML)},
  year = {2026},
  note = {To appear},
  url = {https://icml.cc/virtual/2026/poster/66683}
}

@inproceedings{kwon2023pagedattention,
  title = {Efficient Memory Management for Large Language Model Serving with {PagedAttention}},
  author = {Kwon, Woosuk and Li, Zhuohan and Zhuang, Siyuan and Sheng, Ying and Zheng, Lianmin and Yu, Cody Hao and Gonzalez, Joseph E. and Zhang, Hao and Stoica, Ion},
  booktitle = {Proceedings of the 29th Symposium on Operating Systems Principles},
  year = {2023},
  pages = {611--626},
  doi = {10.1145/3600006.3613165},
  url = {https://doi.org/10.1145/3600006.3613165}
}

@misc{huang2026collm,
  title = {{CoLLM}: Continuous Adaptation for {SLO}-Aware {LLM} Serving on Shared {GPU} Clusters},
  author = {Huang, Shaoyuan and Zhao, Yunfeng and Yan, Na and Zhang, Tiancheng and Wang, Xiaokai and Wang, Xiaofei and Wang, Wenyu and Deng, Yansha},
  year = {2026},
  eprint = {2604.16400},
  archivePrefix = {arXiv},
  primaryClass = {cs.DC},
  doi = {10.48550/arXiv.2604.16400},
  url = {https://arxiv.org/abs/2604.16400}
}

@misc{li2025lemix,
  title = {{LeMix}: Unified Scheduling for {LLM} Training and Inference on Multi-{GPU} Systems},
  author = {Li, Yufei and Li, Zexin and Zhu, Yinglun and Liu, Cong},
  year = {2025},
  eprint = {2507.21276},
  archivePrefix = {arXiv},
  primaryClass = {cs.AI},
  doi = {10.48550/arXiv.2507.21276},
  note = {46th IEEE Real-Time Systems Symposium (RTSS 2025)},
  url = {https://arxiv.org/abs/2507.21276}
}

@misc{li2025mace,
  title = {{MACE}: A Hybrid {LLM} Serving System with Colocated {SLO}-aware Continuous Retraining Alignment},
  author = {Li, Yufei and Fu, Yu and Dong, Yue and Liu, Cong},
  year = {2025},
  eprint = {2510.03283},
  archivePrefix = {arXiv},
  primaryClass = {cs.AI},
  doi = {10.48550/arXiv.2510.03283},
  url = {https://arxiv.org/abs/2510.03283}
}

@inproceedings{oliaro2026flexllm,
  title = {{FlexLLM}: Token-Level Co-Serving of {LLM} Inference and Finetuning with {SLO} Guarantees},
  author = {Oliaro, Gabriele and Miao, Xupeng and Cheng, Xinhao and Kada, Vineeth and Wu, Mengdi and Gao, Ruohan and Huang, Yingyi and Delacourt, Remi and Yang, April and Wang, Yingcheng and Unger, Colin and Jia, Zhihao},
  booktitle = {23rd USENIX Symposium on Networked Systems Design and Implementation (NSDI 26)},
  year = {2026},
  pages = {1359--1379},
  isbn = {978-1-939133-54-0},
  address = {Renton, WA},
  publisher = {USENIX Association},
  month = may,
  url = {https://www.usenix.org/conference/nsdi26/presentation/oliaro}
}

@inproceedings{chen2025mudi,
  title = {Multiplexing Dynamic Deep Learning Workloads with {SLO}-awareness in {GPU} Clusters},
  author = {Chen, Wenyan and Lu, Chengzhi and Xu, Huanle and Ye, Kejiang and Xu, Chengzhong},
  booktitle = {Proceedings of the Twentieth European Conference on Computer Systems},
  year = {2025},
  pages = {589--604},
  doi = {10.1145/3689031.3696074},
  url = {https://doi.org/10.1145/3689031.3696074}
}

@inproceedings{qiao2020pollux,
  title = {{Pollux}: Co-adaptive Cluster Scheduling for Goodput-Optimized Deep Learning},
  author = {Qiao, Aurick and Choe, Sang Keun and Subramanya, Suhas Jayaram and Neiswanger, Willie and Ho, Qirong and Zhang, Hao and Ganger, Gregory R. and Xing, Eric P.},
  booktitle = {15th USENIX Symposium on Operating Systems Design and Implementation (OSDI 21)},
  year = {2021},
  pages = {1--18},
  isbn = {978-1-939133-22-9},
  publisher = {USENIX Association},
  month = jul,
  url = {https://www.usenix.org/conference/osdi21/presentation/qiao}
}

@inproceedings{xiao2018gandiva,
  title = {Gandiva: Introspective Cluster Scheduling for Deep Learning},
  author = {Xiao, Wencong and Bhardwaj, Romil and Ramjee, Ramachandran and Sivathanu, Muthian and Kwatra, Nipun and Han, Zhenhua and Patel, Pratyush and Peng, Xuan and Zhao, Hanyu and Zhang, Quanlu and Yang, Fan and Zhou, Lidong},
  booktitle = {13th USENIX Symposium on Operating Systems Design and Implementation (OSDI 18)},
  year = {2018},
  pages = {595--610},
  isbn = {978-1-939133-08-3},
  address = {Carlsbad, CA},
  publisher = {USENIX Association},
  month = oct,
  url = {https://www.usenix.org/conference/osdi18/presentation/xiao}
}

@inproceedings{delimitrou2013paragon,
  author    = {Delimitrou, Christina and Kozyrakis, Christos},
  title     = {Paragon: {QoS}-Aware Scheduling for Heterogeneous Datacenters},
  booktitle = {Proceedings of the 18th International Conference on Architectural
               Support for Programming Languages and Operating Systems (ASPLOS)},
  year      = {2013},
  pages     = {77--88}
}

@inproceedings{delimitrou2014quasar,
  author    = {Delimitrou, Christina and Kozyrakis, Christos},
  title     = {Quasar: Resource-Efficient and {QoS}-Aware Cluster Management},
  booktitle = {Proceedings of the 19th International Conference on Architectural
               Support for Programming Languages and Operating Systems (ASPLOS)},
  year      = {2014},
  pages     = {127--144}
}
